\DeclareMathOperator{\supp}{supp}
\DeclareMathOperator*{\argmin}{arg\,min}
\renewcommand{\ALG@name}{Algorithm}
\pgfplotsset{width=10cm}
\tikzset{declare function={gamma(\x)=sqrt(2*pi)*\x^(\x-0.5)*exp(-\x)*exp(1/(12*\x));}}
\tikzset{declare function={tpdf(\x,\nu)=gamma(0.5*(\nu+1))/(sqrt(pi*\nu)*gamma(\nu/2))*(1+\x^2/\nu)^(-(\nu+1)/2);}}
\tikzset{declare function={invgampdf(\x,\a,\b)=(\b/\x)^\a/\x/gamma(\a)*exp(-\b/\x);}}
\newcommand{\nhphantom}[1]{\ifmmode\settowidth{\dimen0}{$#1$}\else\settowidth{\dimen0}{#1}\fi\hspace*{-\dimen0}}
\tikzset{
	hatch distance/.store in=\hatchdistance,
	hatch distance=5pt,
	hatch thickness/.store in=\hatchthickness,
	hatch thickness=0.5pt,
}
\newcommand{\wh}[1]{\widehat{#1}}
\definecolor{pink}{rgb}{0.9, 0.17, 0.31}
\def\C {\,|\:}
\newcommand\E{\mathbb E}
\renewcommand\d{\mathrm d}
\renewcommand\P{\mathbb P}
\newcommand\R{\mathbb R}
\renewcommand\b{\bm{\beta}}
\newcommand\iid{\overset{\mathrm{iid}}{\sim}}
\newtheorem{assumption}{Assumption}
\newtheorem{lemma}{Lemma}
\renewcommand{\nhphantom}[1]{\ifmmode\settowidth{\dimen0}{$#1$}\else\settowidth{\dimen0}{#1}\fi\hspace*{-\dimen0}}
\numberwithin{equation}{section}
\newtheorem{thm}{Theorem}[section]
\newtheorem*{defn}{Definition}
\crefname{thm}{Theorem}{Theorems}
\crefname{prop}{Proposition}{Propositions}
\crefname{lem}{Lemma}{Lemmas}
\crefname{coro}{Corollary}{Corollaries}
\crefname{add}{Addendum}{Addendums}
\crefname{asm}{Assumption}{Assumptions}
\crefname{alg}{Algorithm}{Algorithms}
\crefname{proc}{Procedure}{Procedures}
\crefname{exe}{Exercise}{Exercises}
\crefname{exa}{Example}{Examples}
\crefname{prob}{Problem}{Problems}
\crefname{section}{Section}{Sections}
\crefname{subsection}{Section}{Sections}
\crefname{appendix}{Appendix}{Appendices}
\begin{document}

\def\spacingset#1{\renewcommand{\baselinestretch}%
{#1}\small\normalsize} \spacingset{1.1}

\title{\sf Generative Regression with IQ-BART}

\author{Sean O'Hagan and Veronika Ro\v{c}kov\'{a}}


	\maketitle

\bigskip
\begin{abstract}

  Implicit Quantile BART (IQ-BART) posits a non-parametric Bayesian model on the conditional quantile function, acting as a model over a conditional model for $Y$ given $X$. 
 One of the key ingredients is  augmenting the observed data $\{(Y_i,\bm X_i)\}_{i=1}^n$ with uniformly sampled  values $\tau_i$ for $1\leq i\leq n$ which serve as training data for quantile function estimation. Using the fact that the location parameter $\mu$ in a  $\tau$-tilted asymmetric Laplace distribution corresponds to the $\tau^{th}$ quantile, we build a check-loss  likelihood  targeting $\mu$ as the parameter of interest. We equip the  check-loss   likelihood parametrized by $\mu=f(\bm X,\tau)$ with a BART prior on $f(\cdot)$, allowing the conditional quantile function to vary both in $\bm X$ and $\tau$.  The posterior distribution over $\mu(\tau,\bm X)$ can be then distilled for estimation of the {\em entire quantile function} as well as for assessing uncertainty through the variation of posterior draws. 
 Simulation-based predictive inference is immediately available through inverse transform sampling using the learned quantile function.
  The sum-of-trees structure over the conditional quantile function enables flexible distribution-free regression with theoretical guarantees. As a byproduct, we investigate posterior mean quantile estimator as an alternative to the routine sample (posterior mode) quantile estimator. We demonstrate the power of IQ-BART on time series forecasting datasets where IQ-BART can capture multimodality in predictive distributions that might be otherwise missed using traditional parametric approaches.

\end{abstract}

\noindent%
{\bf Keywords:} BART, Quantile Regression, Distributional Regression, Nonparametric Bayes

\spacingset{1.45} 

\clearpage

\section{Introduction}

Many contemporary applications involve models whose likelihood functions  are defined only implicitly  through generative  programs. Such  systems are parametrized probabilistic mechanisms that take noise variables as input and   return simulated data  as output \cite{gutmann}.
In this work, we will assume a particular form of a    {\em generative regression} model
\begin{equation}\label{npr}
Y_i=f_{\theta_0}(\bm x_i,\varepsilon_i),\quad \varepsilon_i\iid P_0,
\end{equation}
where each output variable $Y_i\in\R$ for $1\leq i\leq n$ is related to a set of $p$ observable  covariates $\bm x_i=(x_{i1},\dots,x_{ip})'$ and  an unobservable noise variable $\varepsilon_i$   through a push-forward mapping $f_{\theta_0}(\cdot)$.
The  mapping $f_{\theta_0}(\cdot)$ may depend on an unknown parameter of interest $\theta_0\in\Theta\subseteq \R^d$ or we can equivalently treat $f$ itself as a (potentially infinite-dimensional) parameter in which case we write $f(\cdot)$. The  implicit generative model \eqref{npr} presents a substantial generalization of traditional statistical models (such as linear regression) that additively separate signal from noise through a mapping $f_{\theta_0}(\cdot)$ that is linear.
The implicit model \eqref{npr} captures the conditional distribution of $Y_i$ given $\bm X_i=\bm x_i\in\mathcal X$ (for $1\leq i\leq n$) through  $f_{\theta_0}(\cdot)$.
This  mapping is pre-specified in contexts when there is no uncertainty about the data  generating algorithm beyond the value of $\theta_0$.
When there is such added uncertainty and/or there are concerns about model mis-specification,
the  map $f_{\theta}(\cdot)$ itself becomes the target for estimation and inference. We refer This is the focus of our paper.

When $\varepsilon_i\iid P_0$ in \eqref{npr} are uniformly distributed according to $U[0,1]$, the mapping $f_{\theta_0}(\cdot)$ corresponds to the quantile function of the conditional distribution of $Y_i$ given $\bm X_i$. Inference on $f_{\theta_0}(\cdot)$ then boils down to quantile learning which presents an attractive alternative to non-parametric regression approaches based on the conditional density function \citep{orlandi2021densityregression, li2023adaptive}.  Related generative approaches to conditional density regression through quantile function estimation with neural networks were considered by \citet{dabney2018implicit} and \citet{shen2024distributional}.
The second approach generalizes naturally to multivariate responses by assessing  distributional fit through an energy score \citep{gneiting2007strictly},   a popular scoring rule for evaluating multivariate distributional predictions.
\citet{dabney2018implicit}, on the other hand,  introduced a data augmentation scheme for learning the conditional quantile function using neural networks in the context of a reinforcement learning problem. Each sample of training data $\bm X_{i} $ is paired with a uniformly sampled quantile $\tau_{i}$, and the estimation loss for that datapoint is the quantile loss (pinball loss) for the corresponding $\tau_{i}$. This approach can be thought of as randomly assigning data points to be used for the estimation of various quantiles, which allows stochastic gradient descent to learn a smooth function that estimates conditional quantiles. This approach has been exploited in the context of posterior simulation by \citet{polson23_gener_ai_bayes_comput}.

We investigate a fully Bayesian  approach to learning the {\em entire conditional quantile function} non-parametrically assuming that $f_{\theta_0}(\cdot)$ can be closely approximated by some $f_{\theta}\in\mathcal F=\{f_{\theta}(\cdot):\theta\in\Theta\}$, where  $\mathcal F$ is a flexible function class with good approximability properties. This paper will explore functions  $f_\theta\in\mathcal F$ that take a form of a forest (i.e. an aggregation of trees).
We will treat forests in a fully Bayesian way through priors on $\mathcal F$ in the spirit of Bayesian additive regression trees (BART) \citep{chipman2010bart}.
Trees and forests have been deployed for conditional density regression before. \citet{meinshausen2006quantile} observed that the weighted average of empirical quantiles inside leaf nodes among regression trees in a random forest can be performant estimators of the corresponding conditional quantile function. This allows for arbitrary quantiles $q\in(0,1)$ to be estimated from the data.
 In \citet{orlandi2021densityregression}, BART priors are used in functions for conditional density regression through a covariate dependent mixture model.
\citet{li2023adaptive}  posit another model for conditional density estimation using BART by tilting a
baseline Gaussian linear model by a functional  consisting of  a sum of soft decision trees, weighted by random basis functions.

Our work leverages the well-known observation \citep{tsionas2003bayesianquantileinference} that  a  quantile $\tau\in(0,1)$ can be estimated by the maximum likelihood estimator  of a location parameter $\mu$ under the asymmetric Laplace distribution (tilted by $\tau$). This allows one to construct a probabilistic wrapper around quantile learning to perform a fully Bayesian inference using an auxiliary Laplace likelihood.
Indeed, the likelihood is only auxiliary because  it  is {\em not meant  to model  the data}, but rather to identify the inferential target.
The  asymmetric Laplace likelihood, parametrized by the location parameter $\mu=f_\theta(\bm X,\tau)$, will be equipped with a  prior on a set of forests $\mathcal F=\{f_\theta(\cdot):\theta\in\Theta\}$, allowing the quantile function to vary  both in $\bm X$ and $\tau$.  Indeed, this work goes beyond existing approaches that assume that $\tau$ is fixed \citep{kindo2016bayesianquantileadditiveregression}, aiming to estimate  the entire quantile function for any $\bm X\in\mathcal X$. The IQ-BART model is a variant of Bayesian non-parametric constructions for density regression \citep{orlandi2021densityregression,li2023adaptive}. However, instead of targeting the conditional distribution directly (e.g. through covariate-dependent mixtures), IQ-BART assigns a BART prior over the quantile function.  Note that the shape of the conditional distribution $Y$ given $\bm X$ is determined by the shape of the Laplace location parameter $\mu=f_\theta(\bm X,\tau)$ for $\tau\in(0,1)$ as opposed to the Laplace distribution itself.

Data augmentation with $\tau_i$'s  and the asymmetric Laplace likelihood with a forest location parameter that depends on both $\bm x_i$'s and $\tau_i$'s give rise to Implicit Quantile BART (IQ-BART).  The aim of this new method is to posit a Bayesian model on the conditional quantile function, acting as a model over a conditional model for $Y_i$ given $\bm x_i$. The posterior distribution over $\mu=f_\theta(\bm X,\tau)$ can be then distilled for estimation of the {\em entire quantile function} as well as for assessing uncertainty through the variation of posterior draws. Similarly to \citet{shen2024distributional}, our approach is ultimately generative where sampling from the conditional distribution can be accomplished through an estimate of the quantile function yielding to sampling-based estimation for the conditional mean and quantiles.
Uncertainty quantification is a unique feature of our approach that is not available for  related optimization-based quantile learning methods. Related approaches such as \citet{shen2024distributional} could be equipped with uncertainty quantification using bootstrap techniques, see e.g. \citet{nie2023deep} for an overview.

The fully Bayesian framework allows us to approximate the posterior distribution of the entire quantile function through MCMC simulation.
The approximate posterior immediately yields the posterior mean quantile function estimator which is in principle different from the posterior mode estimator which would correspond to the classical (sample) quantile estimator. We assess the discrepancy of the posterior mode and mean in unconditional quantile estimation and show that the posterior mean exhibits automatic interpolation and more often favorable $L_{2}$ risk properties in finite samples. For the full-blown IQ-BART  method, we provide theoretical support by quantifying the rate at which the posterior distribution concentrates around the true conditional quantile function. Finally, we demonstrate the efficacy of the approach on synthetic distributional regression examples, generic time-series one-step-ahead forecasting, and modeling predictive distributions for one-quarter-ahead economic and financial conditions. The latter example shows that time series may exhibit bimodal conditional distributions which make the data not amenable to classical parametric analysis. It is precisely with flexible techniques such as IQ-BART that one can fully capture the frame of these distributions without stringent distributional assumptions.

The paper is structured as follows. Section~\ref{sec:bayes-quantile-fn-est} begins by reviewing Bayesian estimation of the quantile from the misspecific likelihood and loss-based perspectives and includes a study of the Gibbs posterior mean as an estimator. In Section~\ref{sec:iqbart} we introduce IQ-BART and discuss variations involving further data augmentation via repeated quantiles as well as a multivariate extension. Section~\ref{sec:theory} provides a theoretical guarantee on the performance of IQ-BART via an upper bound on the rate of posterior contraction around the true conditional quantile function. Section~\ref{sec:sampling} discusses an approximate posterior sampling algorithm for sampling from the IQ-BART posterior and touches on sampling from predictive distributions given covariate values. We perform a simulation study in Section~\ref{sec:simulation-study}, benchmarking IQ-BART relative to related distributional regression techniques on three example data-generating processes. Finally, we apply IQ-BART for one-step-ahead time-series forecasting both on U.S. economic and financial conditions as well as on datasets from the M4 Forecasting competition \citep{makridakis2020m4}. We provide our concluding remarks in Section~\ref{sec:conclusion}.

\section{Bayesian Quantile Learning}\label{sec:bayes-quantile-fn-est}
For $\tau\in (0,1)$, the $\tau$-quantile of a probability measure $P$  is defined as any value $q_{\tau}$ that satisfies $\P_{Y\sim P}(Y\leq q_{\tau})=\tau$.
It may alternatively be characterized as a population risk minimizer for the $\tau$-check loss function $\rho_{\tau}$ defined via $\rho_{\tau}(x)=x(\tau-\mathbb{I}\{x<0\})$, i.e.
\begin{equation}
q_{\tau}\in \argmin_{\beta} \E_{Y\sim P} \left[ \rho_{\tau}(Y-\beta) \right].\label{eq:quantile-defn}
\end{equation}
The quantile $q_{\tau}$ is unique when the distribution function of $P$ is strictly increasing and continuous. In such a case, the distribution function $F$ is invertible and we may write $q_{\tau}=F^{-1}(\tau)$.

A natural $\tau$-quantile estimator from $iid$ realizations $\bm Y=(Y_1,\dots, Y_n)'$ is the sample quantile $\hat{q}_{\tau}$ which can be cast as a solution to the following empirical version of the optimization problem
$$
\hat{q}_{\tau}\in\argmin_{\beta}\sum_{i=1}^n\rho_\tau(Y_i-\beta).
$$
Instead of specifying the estimand through the loss function, $\hat{q}_{\tau}$ can alternatively be regarded as an estimator of a parameter in a certain parametric model.
 It has been common practice for Bayesian practitioners~\citep{tsionas2003bayesianquantileinference, yu2001bayesianquantileregression, kindo2016bayesianquantileadditiveregression} to approach Bayesian quantile learning by employing an intentionally mis-specified  auxiliary model using  asymmetric Laplace likelihood on  $Y_i$'s (defined below).

\begin{defn}
  The \emph{asymmetric Laplace distribution} $\mathrm{ALD}(\mu,\lambda,\tau)$ is a   probability measure  on the real line with density
  \begin{equation}\label{eq:asym-lap-dens}
  \pi(y \C \mu, \lambda, \tau)=\lambda^{-1}\tau(1-\tau)\exp\{-\rho_\tau(y-\mu)/\lambda\}
  \end{equation}
  parameterized by location $\mu\in \mathbb{R}$, scale $\lambda>0$, and tilt $\tau\in (0,1)$.
\end{defn}

Unsurprisingly, $\hat{q}_{\tau}$ can be regarded as the maximum likelihood estimator of the location parameter $\mu$ of the asymmetric Laplace distribution from iid observations. Under the flat prior $\pi(\mu)\propto 1$, this estimator coincides with the {\em posterior mode}.
Contrastingly, the {\em posterior mean} estimator under the Laplace likelihood and the flat prior is given by
\begin{align}\label{eq:quantile-post-mean}
\hat{q}_{\tau,\lambda}^{\mathrm{PM}}\equiv \E \left[ \mu\mid Y_{1},\ldots,Y_{n} \right]&=\frac{\int \mu \prod_{i=1}^n \pi(Y_i\C \mu,\lambda,\tau)\d \mu}{\int \prod_{i=1}^n \pi(Y_i\C \mu,\lambda,\tau)\d \mu}\\ &=\frac{\int\mu \exp\left\{-\lambda^{-1}\sum_{i=1}^n\rho_\tau(Y_i-\mu)\right\}\d\mu}{\int \exp\left\{-\lambda^{-1}\sum_{i=1}^n\rho_\tau(Y_i-\mu)\right\}\d\mu}\,.
\end{align}
Before investigating properties of the estimator~\eqref{eq:quantile-post-mean}, we foray into an alternative viewpoint of Bayesian quantile estimation.

Both estimators $\hat{q}_{\tau}$ and $\hat{q}_{\tau,\lambda}^{\mathrm{PM}}$ have Bayesian interpretations as the posterior mode and mean, respectively, for the location parameter $\mu$ of a $\tau$-tilted asymmetric Laplace likelihood with a scale parameter $\lambda$.
Under the misspecified model, the $\hat{q}_{\tau}$ and $\hat{q}_{\tau}^{\mathrm{PM}}$ would be Bayes estimators for the $0$-$1$ loss and squared error loss, respectively, under the improper flat prior. \citet{yu2001bayesianquantileregression} demonstrate that the flat improper prior on the location parameter does indeed lead to a proper posterior.
The sample quantile (posterior mode) estimator $\hat{q}_{\tau}$  is asymptotically unbiased for the $\tau$ quantile, and furthermore, if $P$ has a continuous and monotone increasing distribution function $F$ with density $f$, then the $\tau$-sample quantile $\hat{q}_{\tau}$ {converges in distribution to a Gaussian centered around the true $\tau$-quantile with variance matching that in Lemma~\ref{lem:posterior_mean_convergence} \citep{ruppert2015statistics}.}
The posterior mean, however, has been somewhat overlooked in the literature. We attempt to perform a comparative analysis of these two estimators.

An alternative way of viewing the actual posterior under the misspecified Laplace model is through  the \emph{Gibbs posterior} \citep{zhang2006from, zhang2006information, jiang2008gibbs, bissiri2016general, martin22_direc_gibbs_poster_infer_risk_minim}.
This generalized Bayesian approach allowing direct inference on a risk minimizer rather than an unknown parameter linked to the data via a likelihood \citep{bissiri2016general, martin22_direc_gibbs_poster_infer_risk_minim}. Indeed, the characterization of the $\tau$-quantile as a risk minimizer derived from the check-loss  allows for direct uncertainty quantification over the quantile via the Gibbs posterior distribution.
Some loss functions may yield Gibbs posteriors which correspond to actual posteriors with a recognizable loss-based likelihood (our case here).
With these considerations in mind, we note that $\hat{q}_{\tau}$ and $\hat{q}_{\tau,\lambda}^{\mathrm{PM}}$ are the mode and mean of the Gibbs posterior for the loss $\ell_{\tau}$ with \emph{learning rate} $\lambda^{-1}$. Under mild conditions on the prior, properties of the Gibbs posterior for the $\tau$-quantile include consistency for $q_{\tau}$ and asymptotic normality \citep{chernozhukov2003classical, martin22_direc_gibbs_poster_infer_risk_minim}.

A practitioner may wish to avoid using Gibbs posteriors (or Laplace-induced posteriors) via nonparametric estimation.  Since the quantile defined through loss minimization, it is possible to proceed by assigning a non-parametric prior on the distribution function to obtain uncertainty quantification for the quantile estimate.
Some of these approaches include Dirichlet process priors \citep{kottas2009bayesian, hjort2007nonparametric, taddy2010bayesian} and  quantile pyramid priors \citep{hjort2009quantile}. Posterior inference over a particular quantile is then available as functional of the posterior probability measure.

\subsection{The Posterior Mean Quantile Estimator}

Our  IQ-BART procedure is implemented via MCMC whose samples can be quickly summarized by the posterior mean (as opposed to  the mode)  quantile estimator. The mean estimator $q^{\mathrm{PM}}_{\tau,\lambda}$ lends itself to a relatively palatable closed-form solution in i.i.d. setting. While the mode estimator has been used widely, we could not find any characterization and investigation of the posterior mean as a quantile estimator in the literature. We attempt to provide some more context below.

\begin{lemma}\label{lem:flat-pm}
  Given a univariate dataset $Y_{1},\ldots,Y_{n}$, let $Y_{(k)}$ denote the $k$th order statistic, with $Y_{0}\equiv -\infty$ and $Y_{n+1}\equiv \infty$. For $0\leq k \leq n$, define
  \[
    w_{k}\equiv \exp\left(-\lambda^{-1}(\tau-1) \sum_{i=1}^{k}Y_{(i)}-\lambda^{-1}\tau\sum_{i=k+1}^{n}Y_{(i)}\right)\,,
  \]
  as well as
\begin{align*}
  \psi_{k} &\equiv  \begin{cases}\frac{1}{c_{k}^{2}}\left[(c_{k}Y_{(k)}+1)\exp(-c_{k}Y_{(k)})-(c_{k}Y_{(k+1)}+1)\exp(-c_{k}Y_{(k+1)})\right] & k/n \neq \tau \\ \frac{1}{2}\left[Y_{(k+1)}^{2}-Y_{(k)}^{2}\right] & k/n = \tau\end{cases} \\
  \varphi_{k} &\equiv  \begin{cases}\frac{1}{c_{k}}\left[\exp(-c_{k}Y_{(k)})-\exp(-c_{k}(Y_{(k+1)})\right] & k/n \neq \tau \\ \left[Y_{(k+1)}-Y_{(k)}\right] & k/n = \tau\end{cases}\,.
\end{align*}
where $c_{k}:= \lambda^{-1}( k-n\tau)$. Then, the posterior mean estimator $\hat{q}_{\tau,\lambda}^{\mathrm{PM}}$ can be computed via the ratio of inner products
\begin{equation}\label{eq:pm-form}
  \hat{q}_{\tau,\lambda}^{\mathrm{PM}}=\frac{\bm{w}^{\top}\bm{\psi}}{\bm{w}^{\top}\bm{\varphi}}\,.
\end{equation}

\end{lemma}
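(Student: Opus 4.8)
The plan is to exploit the fact that, as a function of $\mu$, the negative log-likelihood $\sum_{i=1}^n \rho_\tau(Y_i-\mu)$ is continuous and piecewise-linear with kinks located exactly at the order statistics $Y_{(1)},\dots,Y_{(n)}$. On each open interval $(Y_{(k)},Y_{(k+1)})$ for $0\le k\le n$, the indicators $\mathbb{I}\{Y_{(i)}-\mu<0\}$ are constant, since $\mu$ exceeds $Y_{(i)}$ precisely for $i\le k$. First I would substitute the two branches of the check loss on this interval and collect terms to obtain
\begin{equation*}
\sum_{i=1}^n \rho_\tau(Y_i-\mu)=(k-n\tau)\,\mu-(1-\tau)\sum_{i=1}^k Y_{(i)}+\tau\sum_{i=k+1}^n Y_{(i)},
\end{equation*}
where the coefficient of $\mu$ simplifies via $(1-\tau)k-\tau(n-k)=k-n\tau$. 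Multiplying by $-\lambda^{-1}$, the $\mu$-free part is exactly $\log w_k$ and the slope is $-c_k$, so the integrand in \eqref{eq:quantile-post-mean} restricted to this interval is simply $w_k\exp(-c_k\mu)$.

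With this piecewise-exponential representation in hand, both integrals decompose into sums over the $n+1$ intervals. For the denominator I would evaluate $\int_{Y_{(k)}}^{Y_{(k+1)}}\exp(-c_k\mu)\,\d\mu$: when $c_k\neq 0$ this equals $c_k^{-1}[\exp(-c_kY_{(k)})-\exp(-c_kY_{(k+1)})]=\varphi_k$, while when $c_k=0$ (equivalently $k/n=\tau$) the integrand is constant and the integral is $Y_{(k+1)}-Y_{(k)}=\varphi_k$. For the numerator I would use the antiderivative $-c_k^{-2}(c_k\mu+1)\exp(-c_k\mu)$ of $\mu\exp(-c_k\mu)$, which between consecutive order statistics yields $\psi_k$ for $c_k\neq 0$, with the degenerate case reducing to $\int\mu\,\d\mu=\tfrac12(Y_{(k+1)}^2-Y_{(k)}^2)=\psi_k$. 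Weighting each term by $w_k$ then gives numerator $\bm{w}^{\top}\bm{\psi}$ and denominator $\bm{w}^{\top}\bm{\varphi}$, and the ratio is \eqref{eq:pm-form}.

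The step requiring the most care is the treatment of the two unbounded boundary intervals, $k=0$ with $Y_{(0)}=-\infty$ and $k=n$ with $Y_{(n+1)}=+\infty$, which is precisely where propriety of the posterior is at stake. Here I would examine the sign of $c_k$: at $k=0$ we have $c_0=-\lambda^{-1}n\tau<0$, so $\exp(-c_0\mu)\to 0$ as $\mu\to-\infty$ and the $\exp(-c_0 Y_{(0)})$ term vanishes; symmetrically $c_n=\lambda^{-1}n(1-\tau)>0$ kills the upper-limit term at $k=n$. This confirms the integrals converge, recovers the flat-prior propriety noted by \citet{yu2001bayesianquantileregression}, and shows that the boundary expressions $\varphi_0,\psi_0,\varphi_n,\psi_n$ arise by formally setting the infinite-endpoint exponentials to zero. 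Ties in the data, or an empty interval when $n\tau$ is an integer, cause no difficulty since a zero-length interval contributes zero to both sums.
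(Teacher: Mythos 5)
Your proposal is correct and follows essentially the same route as the paper's proof: partition the integrals in \eqref{eq:quantile-post-mean} at the order statistics, observe that the integrand on each piece is $w_k\exp(-c_k\mu)$, and evaluate the resulting exponential (or degenerate constant/linear) integrals to obtain $\psi_k$ and $\varphi_k$. Your explicit sign analysis of $c_0<0$ and $c_n>0$ to justify convergence on the two unbounded intervals is a detail the paper leaves implicit, but it does not change the argument.
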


\begin{proof}
  Refer to Section~\ref{sec:pf-lem-1}.
\end{proof}

{Regardless of the learning rate $\lambda^{-1}$, the posterior mean estimator $\hat{q}_{\tau,\lambda}^{\mathrm{PM}}$ converges to the sample quantile $\hat{q}_{\tau}$ as the sample size grows large.

\begin{lemma}\label{lem:posterior_mean_convergence}
Suppose $n$ iid observations are drawn from a probability measure with strictly increasing and continuous distribution function $F$ and density $f$. Under the conditions of Theorem 3 of \citep{martin22_direc_gibbs_poster_infer_risk_minim}, for any learning rate $\lambda^{-1}>0$, the posterior mean estimator satisfies
\[
\hat{q}_{\tau,\lambda}^{\mathrm{PM}}\overset{D}{\to}\mathcal{N}\left(F^{-1}(\tau),\frac{\tau(1-\tau)}{n\left(f\left(F^{-1}(\tau)\right)\right)^{2}}\right)
\]
where the convergence is in distribution as the sample size $n$ grows.
\end{lemma}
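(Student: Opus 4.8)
The plan is to reduce the statement to the classical asymptotic normality of the sample quantile $\hat q_\tau$ by showing that the posterior mean $\hat q_{\tau,\lambda}^{\mathrm{PM}}$ is asymptotically equivalent to $\hat q_\tau$ on the $\sqrt n$ scale. Recall that $\hat q_{\tau,\lambda}^{\mathrm{PM}}$ is exactly the mean of the Gibbs posterior $\Pi_n(\d\mu)\propto \exp\{-\lambda^{-1}\sum_{i=1}^n\rho_\tau(Y_i-\mu)\}\,\d\mu$ with learning rate $\lambda^{-1}$ and check-loss $\rho_\tau$, whose population risk $R(\beta)=\E_{Y\sim P}[\rho_\tau(Y-\beta)]$ is minimized at $F^{-1}(\tau)$ and satisfies $R'(\beta)=F(\beta)-\tau$ and $R''(\beta)=f(\beta)$. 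First I would invoke Theorem 3 of \citep{martin22_direc_gibbs_poster_infer_risk_minim}, whose regularity conditions are assumed, to obtain a Bernstein--von Mises statement: the recentered and rescaled posterior law of $\sqrt n(\mu-\hat q_\tau)$ converges, in total variation and in probability over the data, to a fixed mean-zero Gaussian. The key consequence is that the posterior concentrates at rate $n^{-1/2}$ around the empirical risk minimizer, which here is precisely the sample quantile $\hat q_\tau$.

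Second, I would upgrade this weak convergence to convergence of the first posterior moment, that is, establish
\[
\sqrt n\,\bigl(\hat q_{\tau,\lambda}^{\mathrm{PM}}-\hat q_\tau\bigr)=\E\bigl[\sqrt n(\mu-\hat q_\tau)\mid Y_1,\dots,Y_n\bigr]\xrightarrow{P}0.
\]
Since the limiting Gaussian from the Bernstein--von Mises step has mean zero, this reduces to a uniform-integrability argument for the centered, rescaled posterior. Here I would exploit the explicit structure of the check-loss likelihood: because $\rho_\tau(y-\mu)$ grows linearly as $|\mu|\to\infty$ (with slopes $1-\tau$ and $\tau$), the integrand $\exp\{-\lambda^{-1}\sum_i\rho_\tau(Y_i-\mu)\}$ has genuinely exponential tails in $\mu$, so the Gibbs posterior has finite moments of all orders and places vanishing mass outside any fixed $\sqrt n$-neighborhood of $\hat q_\tau$. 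The closed form of Lemma~\ref{lem:flat-pm} can be used directly to bound these tail contributions: the weights $w_k$ together with $\psi_k,\varphi_k$ render the numerator and denominator integrals explicit, and one can show that the extreme order-statistic blocks contribute negligibly relative to the central block where $k/n\approx\tau$.

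Finally, I would combine the two pieces by Slutsky's theorem. By the classical sample-quantile central limit theorem under a continuous, strictly increasing $F$ with positive density (the same convergence referenced just before the statement),
\[
\sqrt n\,\bigl(\hat q_\tau-F^{-1}(\tau)\bigr)\xrightarrow{D}\mathcal N\!\left(0,\frac{\tau(1-\tau)}{\bigl(f(F^{-1}(\tau))\bigr)^{2}}\right),
\]
whose variance is exactly the sandwich form $V_{\mathrm{score}}/(R'')^{2}=\tau(1-\tau)/f^{2}$, with $V_{\mathrm{score}}=\Var[\,\tau-\mathbb I\{Y<F^{-1}(\tau)\}\,]=\tau(1-\tau)$. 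Adding the $o_P(1)$ discrepancy from the previous step yields the identical limit for $\sqrt n(\hat q_{\tau,\lambda}^{\mathrm{PM}}-F^{-1}(\tau))$, which is the claim. Note that the learning rate $\lambda^{-1}$ enters only the posterior \emph{spread} and drops out of the limiting law of the \emph{mean}, explaining why the result holds for every $\lambda^{-1}>0$.

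The main obstacle I anticipate is the uniform-integrability step. Bernstein--von Mises delivers convergence of the posterior \emph{distribution}, whereas transferring this to its \emph{mean} requires ruling out escaping mass and heavy tails; moreover the asymmetry of $\rho_\tau$ means the posterior is not symmetric, so the argument must confirm that the $O(1/n)$ gap between the posterior mean and the posterior mode $\hat q_\tau$ is negligible after multiplication by $\sqrt n$. The linear growth of the check loss and the explicit integrals of Lemma~\ref{lem:flat-pm} are precisely what make this control available.
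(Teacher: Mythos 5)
Your argument is correct and follows essentially the same route as the paper, whose proof consists only of citing \citet{chernozhukov2003classical} for the Gibbs/quasi-posterior asymptotics and \citet{ruppert2015statistics} for the sample-quantile CLT; your three steps (Bernstein--von Mises for the Gibbs posterior, transfer to the first moment via uniform integrability using the linear tails of the check loss, then Slutsky with the classical quantile CLT) are precisely the content of those references, with the moment-transfer step being what Chernozhukov and Hong's total-variation-of-moments convergence supplies.
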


\begin{proof}
  Refer to Section~\ref{subsec:asymp-pf}.
\end{proof}

Lemma~\ref{lem:posterior_mean_convergence} indicates that the asymptotic behavior of the posterior mean estimator matches that of the sample quantile regardless of $\lambda$. Any advantage of the posterior mean estimator arrives in its ability to estimate quantiles in small sample sizes, where it may have favorable interpolation qualities when compared to the sample quantile (see Figure \ref{fig:risk-comp} for risk comparisons). We will now switch our focus to studying the effect of varying $\lambda$ on the finite sample behavior of the posterior mean estimator.
}

\subsection{Calibration of the Learning Rate}

\begin{figure}[ht]
  \centering
  \includegraphics[width=\linewidth]{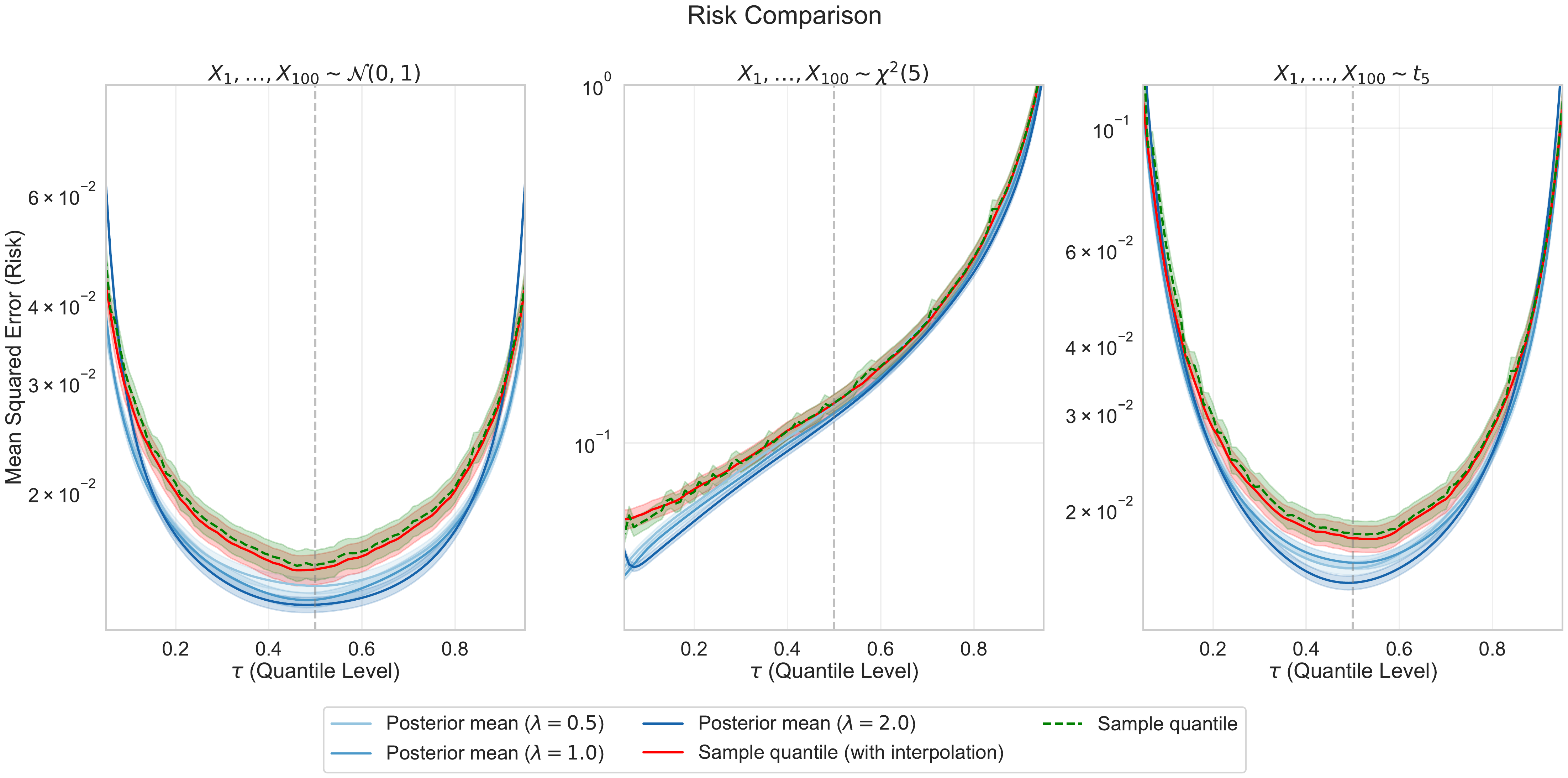}
  \caption{\label{fig:risk-comp} $L_{2}$ risk {(squared error loss)}  profile comparison between the sample quantile and posterior mean (flat prior) in estimation of the $\tau$ quantile from 100 $iid$ samples of three example distributions. Risk is estimated from the mean of 10,000 data replications. Shaded regions indicate one standard error. We also assume the support of each probability measure is not known at the time of estimation. }
\end{figure}

When employed as an estimator for the risk minimizer of the check-loss induced risk of general probability measures, eliciting the learning rate $\lambda^{-1}$ becomes a salient concern for the purpose of efficient estimation or well-calibrated posterior credible sets. As $\lambda$ approaches 0, the posterior mean estimator converges to the sample quantile. This can be clearly seen through the Gibbs posterior framework, as the learning rate $\lambda^{-1}$ converging to infinity implies that the likelihood dominates and the prior becomes irrelevant. In this case, the Gibbs posterior mean converges to the empirical risk minimizer (the sample quantile). On the other hand, when $\lambda^{-1}$ converges to zero, the prior dominates and the Gibbs posterior mean typically converges to the mean of the prior. In the case where the support is unbounded, the flat prior is improper and does not have have a finite mean, which can lead to unclear behavior.

{ 

\begin{lemma}\label{lem:asymp-lam}
  As the learning rate satisfies $\lambda^{-1}\to \infty$, the posterior mean satisfies $\hat{q}_{\tau,\lambda}^{\mathrm{PM}}\to \hat{q}_{\tau}$. As the learning rate satisfies $\lambda^{-1}\to 0$, the posterior mean resulting from the improper prior satisfies $\hat{q}^{\mathrm{PM}}_{0.5,\lambda}\to (1/n)\sum_{i=1}^{n}{Y_{i}}$ and diverges for $\tau\neq 0.5$. When the uniform prior is no longer improper but proper and truncated in the interval $[a,b]$, the posterior mean estimator satisfies $\hat{q}^{\mathrm{PM}}_{\tau,\lambda}\to (a+b)/2$.
\end{lemma}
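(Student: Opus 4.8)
The plan is to reduce every claim to the asymptotics of the two integrals $D(\lambda):=\int_{\mathbb R}e^{-\lambda^{-1}R(\mu)}\,\d\mu$ and $N(\lambda):=\int_{\mathbb R}\mu\,e^{-\lambda^{-1}R(\mu)}\,\d\mu$, where $R(\mu):=\sum_{i=1}^n\rho_\tau(Y_i-\mu)$, since $\hat q_{\tau,\lambda}^{\mathrm{PM}}=N(\lambda)/D(\lambda)$. The single structural fact I would use throughout is that $R$ is convex and piecewise affine: on $[Y_{(k)},Y_{(k+1)}]$ one has $R(\mu)=a_k+(k-n\tau)\mu$ for a $\mu$-free constant $a_k$, so that $e^{-\lambda^{-1}R(\mu)}=w_k\,e^{-c_k\mu}$ there with $c_k=\lambda^{-1}(k-n\tau)$, and the quantities of Lemma~\ref{lem:flat-pm} satisfy $w_k\varphi_k=\int_{Y_{(k)}}^{Y_{(k+1)}}e^{-\lambda^{-1}R}$ and $w_k\psi_k=\int_{Y_{(k)}}^{Y_{(k+1)}}\mu\,e^{-\lambda^{-1}R}$, whence $D=\bm w^\top\bm\varphi$ and $N=\bm w^\top\bm\psi$.

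For the regime $\lambda^{-1}\to\infty$ I would run a Laplace-type concentration argument. Because $R$ has slope $k-n\tau$ on the $k$-th interval, its minimizer is the sample quantile $\hat q_\tau$ (the single order statistic $Y_{(\lceil n\tau\rceil)}$ when $n\tau\notin\mathbb N$, and the midpoint interval $[Y_{(n\tau)},Y_{(n\tau+1)}]$ otherwise). Writing $\hat q_{\tau,\lambda}^{\mathrm{PM}}=\sum_k\frac{w_k\varphi_k}{\bm w^\top\bm\varphi}\cdot\frac{\psi_k}{\varphi_k}$ as a mass-weighted average of the interval means $\psi_k/\varphi_k$, I would show that the two intervals abutting the minimizer carry the global minimum of $R$ and hence dominate all other weights $w_k\varphi_k$ by an exponentially large factor as $\lambda^{-1}\to\infty$; on those two intervals $c_k\to\pm\infty$, so the tilt $e^{-c_k\mu}$ pushes each interval mean to the shared endpoint $\hat q_\tau$. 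The weighted average therefore collapses to $\hat q_\tau$.

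The regime $\lambda^{-1}\to 0$ under the improper flat prior is the crux, and the behavior is tail-driven. With $\epsilon=\lambda^{-1}$, only the two unbounded intervals matter to leading order, on which $R$ is affine with slopes $n(1-\tau)$ and $-n\tau$; the bounded middle contributes $O(1)$. Evaluating the tail integrals gives $D(\lambda)=\frac{1}{\epsilon n\tau(1-\tau)}+O(1)$ and $N(\lambda)=\frac{1}{\epsilon^2 n^2}\big(\frac{1}{(1-\tau)^2}-\frac{1}{\tau^2}\big)+O(\epsilon^{-1})$. When $\tau\neq\tfrac12$ the $\epsilon^{-2}$ coefficient equals $\frac{2\tau-1}{\tau^2(1-\tau)^2}\neq0$, so $\hat q_{\tau,\lambda}^{\mathrm{PM}}\sim\frac{\lambda}{n}\cdot\frac{2\tau-1}{\tau(1-\tau)}$, which diverges (to $\pm\infty$ according to the sign of $2\tau-1$). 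When $\tau=\tfrac12$ this leading term cancels and one must retain the next order: expanding $e^{-\epsilon R(Y_{(1)})}$ and $e^{-\epsilon R(Y_{(n)})}$ to first order and using $\rho_{1/2}(x)=\tfrac12|x|$ to write $R(Y_{(n)})-R(Y_{(1)})=\tfrac n2(Y_{(1)}+Y_{(n)})-\sum_iY_i$, the surviving $\epsilon^{-1}$ term of $N$ becomes $\frac{4}{\epsilon n^2}\sum_iY_i$, while $D\sim\frac{4}{\epsilon n}$, so $\hat q_{0.5,\lambda}^{\mathrm{PM}}\to\frac1n\sum_iY_i$. This delicate cancellation, in which the $O(\lambda^2)$ term and part of the $O(\lambda)$ term annihilate and only the sample-mean contribution persists, is the main obstacle and the step demanding the most careful bookkeeping.

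Finally, for the proper uniform prior truncated to $[a,b]$ the argument is immediate, since $R$ is bounded on $[a,b]$ and hence $e^{-\lambda^{-1}R(\mu)}\to1$ uniformly there as $\lambda^{-1}\to0$. Dominated convergence then yields $\int_a^b\mu\,e^{-\lambda^{-1}R}\to\tfrac12(b^2-a^2)$ and $\int_a^b e^{-\lambda^{-1}R}\to b-a$, so that $\hat q_{\tau,\lambda}^{\mathrm{PM}}\to\frac{b^2-a^2}{2(b-a)}=\frac{a+b}{2}$, the prior mean.
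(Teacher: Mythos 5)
Your proof is correct and follows essentially the same route as the paper's: the central improper-prior, $\lambda^{-1}\to 0$ claim is handled in both by expanding the two unbounded tail integrals of the piecewise-exponential posterior density and tracking the $\epsilon^{-2}$ and $\epsilon^{-1}$ coefficients, the only organizational difference being that the paper first centers the data at $\bar{Y}$ (which makes the tail weights equal to $1$ exactly and reduces the $\tau=0.5$ case to showing a remainder of order $\lambda^{-1}$), whereas you work with uncentered data and extract $\frac{1}{n}\sum_{i}Y_{i}$ from the surviving $\epsilon^{-1}$ coefficient after the cancellation you correctly identify. For the $\lambda^{-1}\to\infty$ and truncated-prior limits you supply explicit Laplace-concentration and dominated-convergence arguments where the paper simply cites Bissiri et al.; the only loose end is the tie case $n\tau\in\mathbb{N}$, where the mass collapses onto the flat minimizing interval rather than a single shared endpoint, so the limit agrees with $\hat{q}_{\tau}$ only up to the usual non-uniqueness of the sample quantile.
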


\begin{proof}
  Refer to Section~\ref{subsec:asymp-pf-lam}.
\end{proof}

}

Practitioners may use a variety of tools for calibration of $\lambda^{-1}$. Practitioners may optimize for predictive accuracy by tuning $\lambda$ on out-of-sample data. The general posterior calibration algorithm of~\citet{syring2018calibrating} can alternatively be employed such that credible intervals are properly calibrated in the frequentist sense. Figure~\ref{fig:risk-comp} shows the simulated risk (induced by squared error loss) taken over 10,000 data replications in estimating the $\tau$ quantile of 100 $iid$ samples from three distributions: standard Gaussian, chi-square with five degrees of freedom, and a Student $t$ distribution with five degrees of freedom. We observe that the the posterior mean can often have lower risk for various quantile values when compared to the sample quantile.

The posterior mean estimator can provide compelling advantages over the sample quantile as demonstrated by the risk profiles across various distributions in Figure~\ref{fig:risk-comp}, which shows empirical approximations of the $L_{2}$ risk in estimating quantile values $\tau\in (0,1)$ from $n=100$ samples for a Gaussian, a chi-squared, and a Student $t$ distribution respectively. The risk profile of the posterior mean estimator is favorable when compared to the sample quantile, with lower risk for most quantile values among these three examples. In the next section, we presume with our novel prior distribution over the conditional quantile function that may be used in a Bayesian analysis. We use these insights to help justify the usage of the posterior mean as a desirable point estimator of the conditional quantile function, as it is computationally trivial to compute from samples while maintaining good performance in the unconditional setting.

\section{Implicit Quantile BART}\label{sec:iqbart}
Our proposed Implicit Quantile BART (IQ-BART) method
aims to perform statistical inference and uncertainty quantification for the conditional model for $Y$ given $\bm{X}$ through the conditional quantile function $Q(\tau \C \bm{X}=\bm{x})=F^{-1}(\tau\C \bm{X}=\bm{x})$ (for $F$ invertible) for   $\tau\in(0,1)$ based on observed covariate vectors and responses $\{(Y_i,\bm X_i)\}_{i=1}^n\iid P$.

For a fully Bayesian analysis, we  exploit the  connection between quantile estimation  and maximum likelihood estimation under an asymmetric Laplace distribution mentioned previously in Section \ref{sec:bayes-quantile-fn-est}.  Recall that the purpose of this likelihood is not necessarily to attempt to model the data, but rather to identify an estimation target $\mu(\bm{X},\tau)$, i.e. the quantile function of the unknown distribution of $Y$ given $\bm{X}$. The distributional assumption for $Y$ given $\bm{X}$ is not expressed in the Laplace likelihood. It is baked inside structural assumptions on the location parameter $\mu(\bm{X},\tau)$ as a function of $\tau$ for fixed $\bm{X}$. This quantile function completely characterizes the distribution of $Y$ given $\bm{X}$.
Assigning a prior over $\mu(\bm{X},\tau)$ (for given $\bm{X}$) is thus equivalent to assigning a prior distribution over conditional distributions for $Y$ given $\bm{X}$.

If we were interested in one pre-specified quantile $\tau\in (0,1)$, we could proceed with a  Bayesian analysis where all observations $\{(Y_i,\bm X_i)\}_{i=1}^n$  would be used to estimate the single conditional $\tau$ quantile.  This would mean that all observations $Y_i$ would share one $\tau$ in the asymmetric Laplace likelihood specification in~\eqref{eq:asym-lap-dens}.
This strategy was considered by \citet{kindo2016bayesianquantileadditiveregression} who imposed a forest structure on the location parameter.
Our goal is more ambitious, however, because we want to estimate the entire quantile function $\mu(\bm X,\tau)$ for all $0<\tau<1$. We achieve this task through a horizontal data augmentation, i.e. adding one additional covariate.

In its simplest form, our approach consists of  generating  an auxiliary random variable  $\tau_i\in (0,1)$ and attaching it to the observation $(Y_i,\bm X_i)$ as a training point for learning the
$\tau_i^{th}$ conditional quantile. We discuss various data augmentation strategies later in Section \ref{sec:data_aug} where multiple random quantiles are generated for each observation $(Y_i,\bm X_i)$. Due to data-augmentation, we are generating fake quantile values that we want to extrapolate from  to learn the entire quantile function for all $\tau$ values.
Endowing the $i^{th}$ row of the training data with a uniformly (independently) sampled random variable $\tau_{i}$, we construct the auxiliary observation model for the responses from asymmetric Laplace distribution, but now with different $\tau_i$ for each observation, i.e.
\begin{equation}
Y_{i}\sim \mathrm{ALD}(\mu_{i}, \lambda, \tau_{i}).\label{eq:ALD}
\end{equation}
The location parameter $\mu_i$ varies from datapoint to datapoint, depending on the values $\bm{X}_i$ as well as $\tau_i$.
In order to borrow strength from ``nearby" observations $(Y_j,\bm X_j)$ (with $\tau_j$ that are similar to $\tau_i$) for the estimation of $\tau_i^{th}$ quantile, it is desirable that  $\mu_i$'s vary somewhat smoothly  in $\tau$.
Now, because $\tau_i$ are different for each datapoint, we have only one observation $Y_i$ to learn about the quantile function $Q(\tau_i \C \bm{X}_i)$.
Relating $\mu_i$ to $\bm{X}_i$ through a forest aggregator enables to borrow strength from observations $Y_j$ for which $\bm{X}_j$ is similar to $\bm{X}_i$.
To enhance information sharing among the observations even more, we allow $\mu_i$ to depend on $\bm X_i$ as well. Namely, we have
\begin{equation}
\mu_i=F_T( \bm{X}_i,\tau_i)=\sum_{k=1}^T \mathcal T_k( \bm{X},\tau)\label{eq:mu}
\end{equation}
where
\begin{equation}
 \mathcal T_k\iid \pi(\mathcal T)\label{eq:tree_prior}
\end{equation}
is a forest of $T$ random binary trees $T_k$ that split on both $\bm X$ as well as $\tau$ and arrive from the Bayesian CART prior $\pi(\mathcal T)$ \citep{chipman1997bayesian, chipman2010bart}.
In other words, the newly manufactured simulated values $\tau_i$ for $1\leq i\leq n$ enter the model in two ways: (1) as a new predictor which allows the trees to split on it to borrow strength from  $\tau_j$ values that are similar to $\tau_i$ when estimating a quantile $Q(\tau_i \C \bm{X})$, (2) as a tilt in the contribution of $Y_i$ to the overall likelihood function.

If one were to estimate $\mu_i$ purely using a frequentist method, one could deploy some parametrization $\mu_i=g_\theta(\bm{X}_{i},\tau_{i})$ and search for
 $$
\hat\theta=\arg\min_{\theta}\sum_{i=1}^n\rho_{\tau_i}(Y_i-g_\theta(\bm{X}_i,\tau_{i})).
$$
This was done by \citet{dabney2018implicit} using deep learning mappings $g_\theta$ and later exploited by \citet{polson23_gener_ai_bayes_comput}. Learning these mappings can be challenging. In addition, imposing monotonicity in $\tau_i$ of the mapping $g(\cdot)$ is not trivial. Obtaining a proper quantile function requires a specialized architecture or optimization technique, or a post-minimization nondecreasing rearrangement. Lastly, one obtains only a point estimator $g_{\hat\theta}(\cdot)$ of the quantile function without any uncertainty quantification. We adopt a fully Bayesian approach using forests as opposed to deep learning.

The prior \eqref{eq:tree_prior} expresses itself as a regularizer in the posterior, obtaining smoother and more robust results than deep learning (as we will show later). The IQ-BART posterior over the conditional quantile function values $\mu$ takes the form
\begin{equation}
\pi(\mu\mid \bm{Y},{\bf X},\bm{\tau})\propto \exp\left\{-\lambda^{-1}\sum_{i=1}^{n}\rho_{\tau_{i}}(Y_{i}-\mu(\bm{X}_{i},\tau_{i}))\right\}\pi(\mu)\,.\label{eq:posterior}
\end{equation}
where $\bm Y=(Y_1,\dots, Y_n)',\bm\tau=(\tau_1,\dots, \tau_n)'$ and ${\bf X}=[\bm X_1',\dots, \bm X_n']'$.
It is more adequate to regard this posterior as the  ``Gibbs posterior" under the check loss. While this posterior does have an interpretation as a genuine posterior under an asymmetric Laplace likelihood, it serves as a vehicle for inference about our estimand (quantile function) as opposed to an underlying parameter value which would give rise to the observed data under the postulated model.

 In addition, we are able to impose a monotonicity constraint using  the monotone BART variant \citep{chipman2022mbart}.
 Finally, the posterior draws $F_T^{(1)},\dots, F_T^{(M)}$ of  $F_T(\cdot)$ as a function of both $\tau$ and $\bm{X}$ allow us to simulate posterior distributions over the conditional quantile function
 $Q(\tau \C \bm{X}) $ for any fixed value $\tau$ and/or $\bm{X}$. We study the frequentist properties of the posterior in the context of quantile function estimation in Section~\ref{sec:theory}. One can distill the  $M$ posterior draws into the Bayesian quantile function estimator
 \begin{equation}\label{eq:cond-post-mean-estimate}
 \hat Q(\tau \C \bm{X}=\bm{x})=\frac{1}{M}\sum_{m=1}^MF_T^{(m)}(\bm{x},\tau).
 \end{equation}
 which is a conditional analogue to the posterior mean estimator described and studied in Section~\ref{sec:bayes-quantile-fn-est}  under a regularization prior.

\subsection{Data Augmentation Strategies}\label{sec:data_aug}
Our implementation of data augmentation using auxiliary uniform random variables $\tau_i\in (0,1)$ consists of sampling an additional covariate and considering it as fixed. Another approach would be to regard them as latent variables that are a part of the hierarchical model and refresh them using simulation.

This idea was implemented in a related approach by \citet{orlandi2021densityregression} who, building on \citep{kundu2014latent}, assume $Y_i=\mu(\bm{X}_i,\tau_i)+\varepsilon_i$ for $\varepsilon_i\iid N(0,\sigma_i^2)$ and, independently, $\tau_i\iid U[0,1]$. Integrating out $\tau_i$'s, this induces a marginal Gaussian mixture model for $Y_i$, given $\bm{X}_i$. Instead, we directly assume that  $Y_i=\mu(\bm{X}_i,\tau_i)$ for some $\tau_i\in (0,1)$ for a push-forward (quantile) map $\mu(\bm{X}_i,\cdot)$ that transforms uniformly sampled noise onto the observed values. The goal is to estimate this map from $\{(Y_i,\bm{X}_i)\}_{i=1}^n$ as opposed to build a generative model that gave rise to $\{(Y_i,\bm{X}_i)\}_{i=1}^n$.

We now introduce a unifying perspective on the IQ-BART posterior \eqref{eq:posterior} through the lens of  Gibbs posteriors under the check loss.  Instead of attaching one single value $\tau_i$ to  $(\bm X_i, Y_i)$ (as in \eqref{eq:posterior}), one could consider the integrated check-loss
\begin{equation}\label{eq:int-risk}
\ell(\mu;\bm{X}_i, Y_i) = \int_{0}^{1}\rho_{\tau}(Y_{i}-\mu(\bm{X}_{i},\tau))\,\mathrm{d}\tau
\end{equation}
which was recently analyzed in the unconditional setting for quantile-function estimation~\citep{narayan2024expected}. It is clear that given  true the conditional quantile function $\mu_{0}=Q(\tau \C \bm{X}=\bm{x})$ that gave rise to $\{(Y_i,\bm X_i)\}$, we have $\mu_{0}\in\argmin_{\mu} \mathbb{E}_{\bm{X},Y\sim P} \left[ \ell(\mu;\bm{X},Y) \right]$. The Gibbs posterior with respect to this loss function with a learning rate $\lambda^{-1}$ is then given by
\begin{align*}
  \pi(\mu\mid \bm{Y},\bm{X},\bm{\tau})&\propto \exp\left\{-\lambda^{-1}\sum_{i=1}^{n}\ell(\mu;\bm{X}_{i},Y_{i})\right\}\pi(\mu) \\
  &= \exp\left\{-\lambda^{-1}\sum_{i=1}^{n}\int_{0}^{1}\rho_{\tau}(Y_{i}-\mu(\bm{X}_{i},\tau))\,\mathrm{d}\tau\right\}\pi(\mu)\,.
\end{align*}
Indeed, such a Gibbs posterior targets the minimizer of the risk induced by~\eqref{eq:int-risk} as an inferential target.

\sloppy Instead of assigning a fixed, uniformly sampled quantile value $\tau_{i}$ to each observation $(\bm{X}_{i}, Y_{i})$, we can further augment the data by considering repeated observations $(\bm{X}_{i}^{(1)},Y_{i}^{(1)}),\ldots,(\bm{X}_{i}^{(r)},Y_{i}^{(r)})$ and augmenting each copied dataset indexed by $j=1,\ldots,r$ by a independently sampled $\tau_{j}\sim\mathcal{U}(0,1)$. Such a scheme resembles a simultaneous quantile regression over the $r$ quantile values $\tau_{1},\ldots,\tau_{r}$ with posterior density
\begin{align}\label{eq:iqbart-density-simultaneous}
  \pi(\mu\mid \bm{Y},\bm{X},\bm{\tau})\propto \exp\left\{-\lambda^{-1}\sum_{i=1}^{n}\sum_{j=1}^{r}\rho_{\tau_{j}}(Y_{i}^{(j)}-\mu(\bm{X}_{i}^{(j)},\tau_{j}))\right\}\pi(\mu)\,.
\end{align}
Note that such a procedure replaces the loss~\eqref{eq:int-risk} in the generalized likelihood factor by its Monte Carlo approximation by $r$ samples. Despite no data occurring outside the fixed set of $r$ quantile values, the sum-of-trees prior over $\mu$ can still performantly interpolate on other quantile values. We study the theoretical properties of the posterior distribution described in~\eqref{eq:iqbart-density-simultaneous} in Section~\ref{sec:theory}.

Finally, a similar approach entails independently sampling separate $\tau_{i}^{(1)},\ldots,\tau_{i}^{(r)}$ for $i\in[n],j\in[r]$ to pair with each identical copy $(\bm{X}_{i}^{(j)},Y_{i}^{(j)})$. This also allows individual data points to be re-used for the learning of multiple quantile values simultaneously while avoiding the potential pitfall of the set of modeled quantiles being only of size $r$. In this case, using the augmented reference table $\{\bm{X}_{i}^{(j)},Y_{i}^{(j)},\tau_{i}^{(j)}:i\in[n],j\in[r]\}$, the IQ-BART posterior over $\mu$ takes the form
\begin{align}\label{eq:iqbart-density-fully-augmented}
  \pi(\mu\mid \bm{Y},\bm{X},\bm{\tau})\propto \exp\left\{-\lambda^{-1}\sum_{i=1}^{n}\sum_{j=1}^{r}\rho_{\tau_{i}^{(j)}}(Y_{i}^{(j)}-\mu(\bm{X}_{i}^{(j)},\tau_{i}^{(j)}))\right\}\pi(\mu)\,.
\end{align}
These approaches imply that we can interpret the deterministic assignment of $\tau_{j}$ (or $\tau_{i}^{(j)}$) values to each data point as a Monte Carlo approximation of the loss~\eqref{eq:int-risk}. Repeating multiple quantile values for each data point increases the fidelity of this Monte Carlo approximation, but comes at the cost of increased computational complexity in approximate posterior sampling. The additional cost of using an augmentation of size $r$ is equivalent to the increase in cost if the number of observations in standard BART sampling has increased from $n$ to $nr$.

{In our work, we primarily consider a data augmentation strategy through a fixed augmentation of uniformly sampled $\tau$ values occurring prior to approximate posterior inference. Other strategies could include refreshing the $\tau$ values to be different uniform samples through the Gibbs sampling procedure, which would allocate data points to different quantiles for different posterior draws rather than for the same one each time.}

{ 
\subsection{Multivariate IQ-BART}

The previous discussions have focused on modeling the conditional distribution of a univariate response $Y$ given a $d$-dimensional covariate vector $\bm{X}$. Recently, \citet{cevid2022journal} proposed a forest-based distributional regression method that may be applied to a multivariate response $\bm{Y}$. \citet{kim2025deepgenerativequantilebayes} extended the approach to generative Bayesian computation via implicit quantile regression of \citet{polson23_gener_ai_bayes_comput} to the multivariate case by parameterizing potential functions in the dual objective of the Kantorovich problem.

Indeed, the existence of the quantile function as a pushforward measure from a uniform distribution is a uniquely univariate phenomenon. An extension of IQ-BART to the multivariate response case may involve a prior over a conditional transport map $T:\mathcal{X}\times\mathcal{V}\to \mathcal{Y}$, where $\mathcal{V}$ is the support of a reference distribution, e.g. $[0,1]^{d}$. The analogue of the conditional quantile function may become the conditional Monge map which minimizes the transport cost over maps that push the reference distribution to the conditional probability measure of the response (or any other valid conditional transport map). An extension may require specifying a prior over step functions constituting valid transport maps and constructing a loss-based likelihood based on the transport cost for Gibbs posterior, which raises concerns of intractability. Aiming for the Brenier map by parameterizing the convex potential is not amenable to step-function style priors due to their lack of differentiability.

Our extension to the multivariate case avoids these pitfalls by targeting the Knothe-Rosenblatt map, a composition of conditional quantile transforms that include increasingly more response components. \citet{carlier2008knothestransportbreniersmap} show that the Knothe-Rosenblatt map is the limiting solution to a sequence of Monge-Kantorovich problems with quadratic costs. While giving up optimality (in the transport cost sense) and adding dependence on the covariate ordering, the Knothe-Rosenblatt map is advantageous in its ability to be efficiently modeled with step functions. This map can be expressed as the recursive application of conditional quantile functions that transport $(\tau_{1},\ldots,\tau_{k})\sim \otimes_{j=1}^{k}\mathcal{U}[0,1]$ to $\bm{Y}\mid \bm{X}$ via
\begin{align*}
  Y_1 &= Q_{Y_1 | X=x}(\tau_1) \\
  Y_j &= Q_{Y_j | Y_1, \ldots, Y_{j-1}, X=x}(\tau_j) \quad\text{for $j=1,\ldots,k$}\,.
\end{align*}

\begin{figure}[ht]
  \centering
  \includegraphics[width=0.6\linewidth]{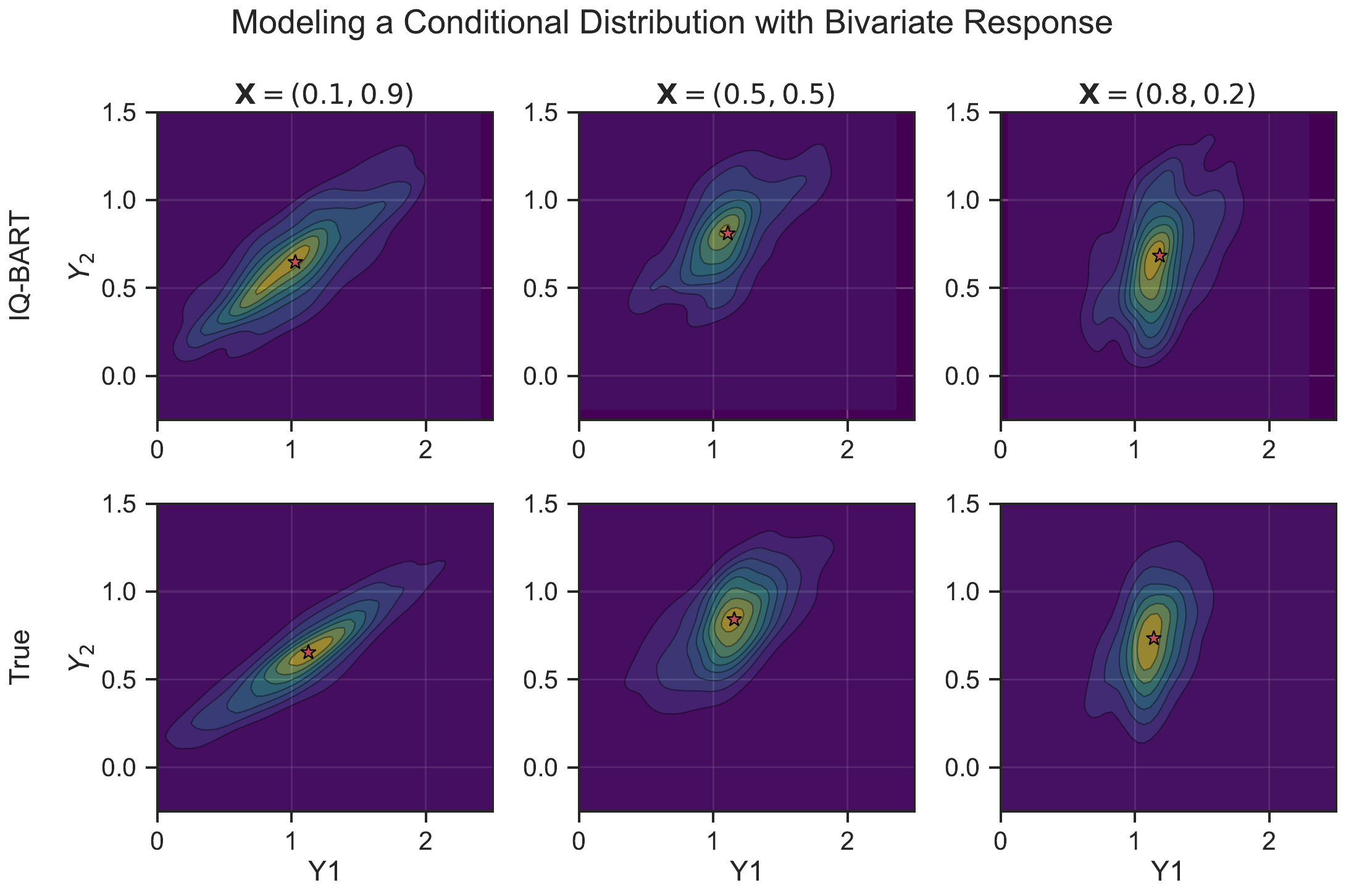}
  \caption{\label{fig:bivariate-iqb} Demonstration of multivariate IQ-BART on a synthetic data example. We fit IQ-BART on 1000 data-points from data generated by \eqref{eq:cond-dgp-bivar}. We plot the resulting conditional distributions on the response for three values of covariates.}
\end{figure}

The distributional regression problem can be posited in a statistical sense by considered this sequence of conditional quantile functions as the parameter, and endowing it with a prior distribution. We endow the $j$th conditional quantile function with a BART prior supported on step functions  $\mathbb{R}^{d+j-1}\times [0,1]\to \mathbb{R}$ for $j=1,\ldots,k$. Analogously to the univariate case, we now augment each data point indexed by $i$ with a $k$-tuple of uniformly sampled quantiles $\bm{\tau}_{i}:=(\tau_{i1},\ldots,\tau_{ik})$ to form the training data tuple $(\bm{X}_{i},\bm{Y}_{i},\bm{\tau}_{i})\in \mathbb{R}^{d}\times\mathbb{R}^{k}\times [0,1]^{k}$. The likelihood has the form
\[
\pi(\mu_{j}\mid \bm{Y},\bm{X},\bm{\tau})\propto \exp\left\{-\lambda^{-1}\sum_{i=1}^{n} \rho_{\tau_{ij}}(Y_{ij}-\mu_{j}(\bm{X}_{i},Y_{1},\ldots,Y_{j-1},\tau_{ij})) \right\}\,.\footnote{The case with further augmentation of $r$ quantile values for each data point described in Section~\ref{sec:data_aug} may be analogously employed.}
\]
The benefit of this approach is that a Gibbs sampling algorithm for the IQ-BART prior and likelihood can be easily adapted by simply sampling each $\mu_{j}$ in sequence.

We demonstrate the multivariate case by considering a simple data generating process where $X_{j}\sim\mathcal{U}(0,1)$ for $j\in\{1,2\}$ and
\begin{align}
\begin{split}\label{eq:cond-dgp-bivar}
Y_1 | X &\sim (1-w) \mathcal{N}(X_1 + X_2, (0.2 + 0.3X_2)^2) + w \text{Exp}(2 + 3X_1) \\
Y_2 | X, Y_1 &\sim \mathcal{N}(0.5Y_1 + X_1X_2, (0.1 + 0.2X_1)^2)
\end{split}
\end{align}
\sloppy where $w = (1 + e^{-(2X_1-1)})^{-1}$. In Figure~\ref{fig:bivariate-iqb} we sample $n=1000$ iid samples from the aforementioned data-generating process to inform the IQ-BART posterior using the previously described procedure. We display the learned conditional distributions on $\bm{Y}\mid\bm{X}=\bm{x}$ for $\bm{x}\in\{(0.1,0.9),(0.5,0.5),(0.8,0.2)\}$. IQ-BART is able to model the conditional response distributional reasonably well at various points in covariate space.

}

\section{Theory for IQ-BART}\label{sec:theory}

The implicit quantile BART procedure specifies the conditional quantile function as its inferential target. Through the generalized Bayesian framework, there are two possible ways to view this target: (1) the risk minimizer of a Monte Carlo estimate of the integrated check-loss risk, and (2) the conditional location parameter of an (misspecified) asymmetric Laplace likelihood. In this section, we adopt the first perspective, and view the IQ-BART posterior as a Gibbs posterior that captures our uncertainty over the risk minimizer induced by a loss function. To establish theoretical guarantees for our procedure, we investigate the rate at which the IQ-BART posterior contracts around the true risk minimizer. This is a fundamental measure that quantifies how rapidly the posterior distribution concentrates around the true estimand as the sample size increases.

We assume that the conditional probability measure $P_{Y\mid \bm{X}}$ is identified by a conditional quantile function $\mu_{0}: \mathcal{X}\times (0,1)\to \mathbb{R}$ that is $\nu$-H\"{o}lder smooth for $\nu>0$. {We use $P^{n}=\bigotimes_{i=1}^{n}P_{\bm{X},Y}$ to represent the joint probability over the data $\mathcal{D}_{n}=\{(\bm{X}_{i},Y_{i}:i=1,\ldots,n\}$, and denote the IQ-BART posterior probability measure as $\Pi_{n}^{r}(\cdot \mid \mathcal{D}_{n})$} corresponding to the density described in~\eqref{eq:iqbart-density-simultaneous}, where $r$ denotes the number of repeated uses of each data point (refer to Section \ref{sec:data_aug}). Ideally, we want to show that $\Pi_{n}^{r}$ contracts around $\mu_{0}$ at a rapid rate. {We impose the following assumptions in order to establish this type of theoretical result.}

\begin{assumption} The following hold
\begin{enumerate}
  \item The covariate space is the $d$ dimensional unit cube $\mathcal{X}=[0,1]^{d}$.
  \item The conditional quantile function $\mu_{0}$ is $\nu$-H\"{o}lder continuous for some $0<\nu\leq 1$, and is square-integrable.
  \item The conditional probability measure $P_{Y\mid \bm{X}}$ admits a continuous density $\pi(y\mid \bm{x})$ for each $\bm{x}\in \mathcal{X}$, and there exist constants $\delta>0$, $\alpha>0$ such that $\pi(y\mid \bm{x})>\alpha$ in the interval $(\mu_{0}(\bm{x},\tau)-\delta, \mu_{0}(\bm{x},\tau)+\delta)$ for all $\bm{x},\tau$.
  \item The parameter space $\mathcal{F}=\{\mu\mid\mu: [0,1]^{d+1}\to \mathbb{R}\}$ is endowed with the prior on step functions $\mathcal{X}\times (0,1)\to \mathbb{R}$ described in Section 4.1 of~\citet{rockova2020posterior}. This prior is then restricted to the set $\mathcal{F}_{n}:\{\mu:\|\mu\|_{\infty}\leq \log^{p}n\}$ for some $p>0$.
\end{enumerate}\label{assn:main}
\end{assumption}

\noindent The first assumption could be straightforwardly generalized to compact subsets of $\mathbb{R}^{d}$. The upper bound of 1 on the H\"{o}lder smoothness is necessary when working directly with trees, but could be circumvented by using priors on more flexible tree-like function classes \citep{linero2018, yee2025scalablepiecewisesmoothingbart}. The third assumption is adapted from a similar assumption in \citep{syring20_gibbs_poster_concen_rates_under} used to show that the check-loss satisfies a sub-exponential type condition which is a sufficient condition for Gibbs posterior contraction in the single conditional quantile estimation setting. The fourth assumption delineates the exact prior that we are studying and restricts the prior to a logarithmically bounded sub-class. Such a restriction is likely not necessary (refer to \citep{kleijn2006misspecification, syring20_gibbs_poster_concen_rates_under}) but simplifies our analysis.

We adopt the data augmentation perspective corresponding to the generalized posterior density~\eqref{eq:iqbart-density-simultaneous} which involves augmenting each covariate and response pair by $r$ quantile values, which we denote with the vector $\bm{\tau}=(\tau_{1},\ldots,\tau_{r})'$. We study the posterior contraction under this generalized likelihood and the prior described in Assumption~\ref{assn:main}.4 under the norm
\[
\|\mu_{0}-\mu\|_{2,\bm{\tau}}:=\frac{1}{r}\sum_{j=1}^{r}\|\mu_{0}(\cdot,\tau_{j})-\mu(\cdot,\tau_{j})\|_{2}\,.
\]
This may be interpreted as the average of the $L_{2}$ distances between the estimated and true $\tau$-conditional quantiles for $\tau\in\{\tau_{1},\ldots,\tau_{r}\}$, corresponding to a simultaneous quantile regression objective or a certain Monte Carlo approximation of $L_{2}$ distance between conditional quantile functions.

\begin{thm}\label{thm:concentration}
  Under Assumption~\ref{assn:main}, given the sequence $\varepsilon_{n}=n^{-\nu/(2\nu+d+1)}\log^{2p+1/2}n$, the posterior measure $\Pi_{n}^{(r)}$ corresponding to the generalized likelihood~\eqref{eq:iqbart-density-simultaneous} and the prior described in Assumption~\ref{assn:main}.4 with learning rate $\lambda^{-1}:=\omega_{n}\in(0, (1/2)\cdot\log^{-2p}n)$ satisfies
  \begin{equation}
    P^{n}\Pi_{n}^{(r)} \left( \left\{ \mu: \|\mu-\mu^{(r)}_{0}\|_{2,\bm{\tau}} > M_{n}\varepsilon_{n} \right\} \mid \mathcal{D}_{n} \right) \to 0
    \label{eq:concentration}
  \end{equation}
  as $n\to\infty$.
\end{thm}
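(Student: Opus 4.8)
The plan is to treat $\Pi_n^{(r)}$ as a Gibbs posterior for the population risk $R(\mu)=\E_{\bm X,Y}\big[\sum_{j=1}^r\rho_{\tau_j}(Y-\mu(\bm X,\tau_j))\big]$, whose minimizer is $\mu_0$, and to verify the three standard ingredients of a Gibbs contraction argument in the sub-exponential-loss framework of \citet{syring20_gibbs_poster_concen_rates_under, martin22_direc_gibbs_poster_infer_risk_minim}: (i) a lower bound on prior mass of small-excess-risk neighborhoods, (ii) control of the empirical process of centered loss increments over the sieve $\mathcal F_n$, and (iii) the identification that the excess risk dominates the target separation measured in $\|\cdot\|_{2,\bm\tau}$. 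The first thing I would establish is the \emph{curvature} of the check-loss risk. Writing $a_j:=\|\mu(\cdot,\tau_j)-\mu_0(\cdot,\tau_j)\|_2$ and applying Knight's identity (a second-order expansion of $\E[\rho_\tau]$), Assumption~\ref{assn:main}.3 — the density bounded below by $\alpha$ on a $\delta$-band around $\mu_0$ — yields a per-quantile lower bound $\gtrsim \min(a_j^2,a_j)$ and an upper bound $\lesssim a_j$, uniformly in $\tau_j$. Summing over $j$ and passing between the average-of-norms $\|\mu-\mu_0\|_{2,\bm\tau}=r^{-1}\sum_j a_j$ and $\sum_j a_j^2$ via Cauchy--Schwarz shows that small excess risk forces $\|\mu-\mu_0\|_{2,\bm\tau}$ small; the restriction $\|\mu\|_\infty\le\log^p n$ in $\mathcal F_n$ keeps every $a_j$ in the regime where these bounds hold.

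For the \textbf{prior mass condition} I would invoke the approximation theory for the Bayesian CART / BART step-function prior from Section 4.1 of \citet{rockova2020posterior}, applied in the augmented $(d{+}1)$-dimensional space $\mathcal X\times(0,1)$. A $\nu$-H\"older $\mu_0$ is approximated to sup-norm error $\lesssim K^{-\nu/(d+1)}$ by a step function supported on a tree partition with $K$ cells, and the prior assigns a sup-norm neighborhood of such a function mass at least $e^{-cK\log n}$. By the curvature bound, sup-norm closeness implies excess risk of order the squared sup-norm error, so $\Pi\big(R(\mu)-R(\mu_0)\le\varepsilon_n^2\big)\ge e^{-cK\log n}$. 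Choosing $K\asymp n^{(d+1)/(2\nu+d+1)}$ balances the approximation error against the prior-mass exponent; crucially, with $\omega_n\asymp\log^{-2p}n$ the Gibbs requirement $K\log n\lesssim n\,\omega_n\varepsilon_n^2$ is met \emph{precisely} because of the $\log^{2p+1/2}n$ factor baked into $\varepsilon_n$, which makes $n\omega_n\varepsilon_n^2\asymp n^{(d+1)/(2\nu+d+1)}\log^{2p+1}n$ dominate $K\log n$. This furnishes the denominator bound $\int e^{-\omega_n\sum_i[\ell(\mu;Z_i)-\ell(\mu_0;Z_i)]}\,\d\Pi(\mu)\ge e^{-C n\omega_n\varepsilon_n^2}$ with high $P^n$-probability.

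For the \textbf{numerator} I would peel the complement $\{\|\mu-\mu_0\|_{2,\bm\tau}>M_n\varepsilon_n\}$ into shells and, on each shell, split the Gibbs exponent into a deterministic drift and a stochastic fluctuation. The drift is $\le -c\,\omega_n n\,\|\mu-\mu_0\|_{2,\bm\tau}^2$ by the curvature, pushing mass away from $\mu_0$; the fluctuation is the centered empirical process of loss increments, which Assumption~\ref{assn:main}.3 renders sub-exponential (Bernstein-type). Combining a Bernstein bound with an $\varepsilon$-entropy estimate for the sieve $\mathcal F_n$ (step functions with boundedly many cells and $\|\cdot\|_\infty\le\log^p n$) and a union/chaining argument shows the fluctuation is $o(\omega_n n\varepsilon_n^2)$ uniformly on each shell. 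Summing the resulting geometric series over shells bounds the numerator by $e^{-c'\omega_n n\varepsilon_n^2}$ relative to the denominator, which gives \eqref{eq:concentration}. Because Assumption~\ref{assn:main}.4 already supports the prior on the bounded class $\mathcal F_n$, the usual ``remaining mass'' (sieve-escape) term vanishes automatically and need not be controlled separately.

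The hard part will be the interaction between the \emph{nonsmoothness} of the check loss and the \emph{shrinking} learning rate $\omega_n\to 0$. Because $\rho_\tau$ is only piecewise linear, the curvature is genuinely local — quadratic inside the $\delta$-band but only linear outside — so the $\min(a_j^2,a_j)$ lower bound must be tracked faithfully through the peeling rather than replaced by a global quadratic. Simultaneously, $\omega_n\to 0$ deflates the effective sample size to $n\omega_n$, so the entire prior-mass-versus-fluctuation balance must be recomputed with $n$ replaced by $n\omega_n$; this is exactly what forces the polylog power $2p+1/2$ in $\varepsilon_n$ and the constraint $\omega_n\in(0,(1/2)\log^{-2p}n)$. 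The most delicate technical point is ensuring the moment generating function of the loss increment is controlled \emph{uniformly} over the sieve, which requires the sub-exponential constant from Assumption~\ref{assn:main}.3 to interact correctly with the $\|\mu\|_\infty\le\log^p n$ envelope so that both the entropy integral and the Bernstein exponent stay below $\omega_n n\varepsilon_n^2$.
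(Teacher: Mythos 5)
Your proposal is correct and follows essentially the same route as the paper: both arguments cast $\Pi_n^{(r)}$ as a Gibbs posterior and verify the sufficient conditions of \citet{syring20_gibbs_poster_concen_rates_under} --- the sub-exponential loss condition coming from the check-loss curvature under Assumption~\ref{assn:main}.3 (the paper imports this from their Proposition 8 rather than re-deriving it via Knight's identity), and the prior mass condition coming from the tree step-function approximation theory of \citet{rockova2020posterior} in $d+1$ dimensions, with the identical balancing $K\log n \lesssim n\omega_n\varepsilon_n^2$ that explains the $\log^{2p+1/2}n$ factor. The only difference is that you propose to carry out the numerator/shell-peeling and empirical-process control by hand, whereas the paper delegates that entire step to Theorem~4.1 of \citet{syring20_gibbs_poster_concen_rates_under} as a black box once the two conditions are verified.
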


\begin{proof}
  Refer to Section~\ref{sec:concentration-thm-proof}.
\end{proof}

{This result details the posterior concentration at the rate $n^{-\nu/(2\nu+d+1)}$, which matches the rate of estimation of a $\nu$-H\"{o}lder smooth function in $d+1$ dimensions under a moment condition on the noise distribution \citep{stone_82}. Notably, the flexibility attained by modeling the quantile function itself imposes a statistical cost as the dimension in the estimation rate grows to $d+1$ instead of $d$ for estimation of any particular quantile curve \citep{stone_82, syring20_gibbs_poster_concen_rates_under}.} To avoid unnecessary complexity, we prefer to analyze the single tree prior in \citep{rockova2020posterior} rather than the true BART prior of \citet{chipman2010bart}. The analysis can be extended to sums of trees by using techniques in Section 6 of \citep{rockova2020posterior} as well as to true Bayesian CART and BART priors using techniques in~\citet{rockova_saha}.

\section{  Posterior Sampling and Inference}\label{sec:sampling}

The IQ-BART posterior can be approximately sampled from using the Markov Chain Monte Carlo (MCMC) algorithm, with slight modifications to the original sampler described by~\citet{chipman2010bart}. Given datapoints $\{(\bm{X}_{i},Y_{i}):i=1,\ldots,n\}$ and a desired $r$ repetitions, we first sample $\tau_{i}^{(j)}\sim \mathcal{U}(0,1)$ independently for $i=1,\ldots,n$ and $j=1,\ldots,r$ and construct the augmented dataset $\{(\bm{X}_{i}^{(j)},Y_{i}^{(j)},\tau_{i}^{(j)}):i\in[n],j\in[r]\}$. We may subsequently employ a BART sampling algorithm of our choice, replacing the typical Gaussian likelihood on the errors with asymmetric Laplace errors with mean $0$, scale $\lambda$ and tilt $\tau_{i}^{(j)}$. For example, both the original Gibbs sampling algorithm of~\citet{chipman2010bart}, as well as the particle Gibbs algorithm in~\citet{lakshminarayanan2015particle}, may be easily modified in this way. {We provide a python implementation of an approximate posterior sampling algorithm based on particle Gibbs\footnote{\url{https://github.com/seanohagan/iqbart}}.}

We treat the scale parameter of the asymmetric Laplace likelihood, which may also be considered as the learning rate of the Gibbs posterior distribution, as a fixed hyperparameter. We found performance to be relatively robust to learning rate choices (refer to Section~\ref{sec:plug-in}), but would generally recommend that the learning rate $\lambda^{-1}$ be tuned out-of-sample or via cross-validation for optimal performance. Recent literature~\citep{syring2018calibrating} suggests a calibration strategy that may be employed to choose $b$ to yield well-calibrated frequentist coverage of credible intervals.

Our implementation leverages the particle Gibbs algorithm by \citet{lakshminarayanan2015particle} to sample trees, using standard hyperparameter choices \citep{chipman2010bart, chipman1997bayesian} to guide the tree-generating stochastic process (That is, a node at depth $d$ is nonterminal with probability $\alpha(1+d)^{-\beta}$, with $\alpha=0.95$, $\beta=2$. Splitting variables are chosen uniformly, and splits are chosen uniformly from the possible splitting points). One variation from the standard choices~\citep{chipman2010bart, chipman1997bayesian} is that we found improved performance with a larger quantity of trees. We recommend using $m=1000$ trees rather than the $m=200$ recommended in~\citep{chipman2010bart}.

A benefit of the IQ-BART framework is that it may be easily implemented in a probabilistic programming language that supports sampling trees in the style of BART~\citep{quiroga2022bart}. IQ-BART may also employ partially monotone priors on sums of trees, such as those in~\citet{chipman2022mbart}. These can foment the posterior distribution to necessarily be a distribution on proper conditional quantile functions that do not cross.

\subsection{Predictive Inference with IQ-BART}
Having access to the simulated sequence of $F_T(\cdot)$ in \eqref{eq:mu}, there are various ways in which one could extract information relevant for predicting $Y_{new}$ from  $\bm{x}_{new}\in \mathcal X$.
One possibility is to generate samples from the conditional distribution through  inverse transform sampling
\begin{equation}
Y_{new}^m= \hat Q(u_m \C \bm{X}=\bm{x}_{new}),\quad\text{where}\quad u_m\sim U[0,1],\label{eq:predict}
\end{equation}
where $\hat{Q}(\cdot\C \cdot)$ is as defined in~\eqref{eq:cond-post-mean-estimate}. This yields values $Y_{new}^m$ that are draws from an approximation to the   predictive distribution $Y\C \bm{X}=\bm{x}_{new}$. This yields a distributional forecast whose summary statistics (sample mean and sample quantiles) can yield point and interval estimators for $Y_{new}$. These point and interval estimators can be also supplied with credible intervals if one were to repeat the simulation \eqref{eq:predict} with each of the $M$ posterior draws $F_T^{(m)}(z)$ instead of the posterior mean $\hat Q(\tau \C \bm{X})$. Each of the $M$ draws emits a point and interval forecast and one can summarize the variability of these estimators for uncertainty quantification. {Such an approach may help the practitioner to decouple the aleatoric and epistemic uncertainty present in the statistic analysis: the estimated conditional distribution of $Y\mid\bm{X}$ seeks to capture uncertainty in the actual conditional distribution of the data, while the variation in the conditional model for $Y\mid \bm{X}$ across posterior draws captures uncertainty over the conditional law itself.}
Alternatively, if one is only interested in the $95\%$ credible interval for $Y_{new}$, one can use {a quantile BART procedure \cite{kindo2016bayesianquantileadditiveregression} that parameterizes the location parameter of an asymmetric Laplace likelihood corresponding to two fixed choices of $\tau\in\{2.5\%,97.5\%\}$}, yielding posterior mean estimates of the quantiles as well as credible intervals. The reported prediction interval can thus incorporate the variation in the posterior draws as well. {Our approach extends this idea by providing a means to sample from the posterior distribution over any arbitrary conditional quantile curves, yielding a Bayesian posterior over the conditional quantile function itself.}

\section{Simulation Study}\label{sec:simulation-study}

We provide empirical validation for our methodology by applying it for distributional regression in three simulated data settings. Below, we detail three examples of complex and interesting joint probability measures on covariate(s) $\bm X$ and response $Y$.

\subsection{Data-generating processes}\label{sec:data-gen-proc}

\paragraph{Difficult Conditional.}  We consider a synthetic data setting that contains difficult characteristics for distributional regression. This experiment involves varying scale, skewness, and multimodality as well as some smooth and some abrupt changes in the value of $Y$ given $X$. The model is a covariate-dependent mixture of Skew-Normal, Student $t$, and Beta random variables. The conditional distribution may be visualized in Figure~\ref{fig:jointfig}, and the data-generating process is explicitly detailed in Section~\ref{subsec:diff-cond-full}.

\paragraph{LSTAR Model.} The logistic smooth transition autoregressive (LSTAR) process has been employed for the modeling of time-dependent data in empirical finance \citep{Franses_Dijk_2000}. The process is parameterized by the vector $(\rho_{1},\rho_{2},\gamma,c,\sigma,\nu)'$ and the conditional law on a time-dependent random variable $Z_{t}$ given its previous value $Z_{t-1}$ via
\[
  Z_{t} = \rho_{1} Z_{t-1} + \rho_{2} \left(1+\exp(-\gamma(Z_{t-1}-c)\right)^{-1} Z_{t-1} + \sigma \varepsilon_{t}
\]
where $\varepsilon_{t}\overset{\mathrm{iid}}{\sim} t(\nu)$. For our experiments, we use the hyperparameter vector $(\rho_{1},\rho_{2},\gamma, c, \sigma, \nu)=(0,0.9,5,0,1,3)$. We initialize the series to begin deterministically with $Z_{0}=0$, and model the conditional distribution of $Z_{t+1}\mid Z_{t}$. Consistency with our earlier notation would therefore entail the covariate $\bm{X}$ being $Z_{t}$, and the response $Y$ being $Z_{t+1}$.

\paragraph{Covariate Dependent Mixture.} We also consider the simulation setup in \citet{orlandi2021densityregression}, which has been used to compare Bayesian nonparametric methods for distributional regression. The data comes from a univariate covariate-dependent mixture of a Gaussian and log gamma distribution, where $X_{i}\overset{\mathrm{iid}}{\sim}U(0,1)$ and $Y_{i} = f_{0}(X_{i})+\epsilon(X_{i})$. Here $f_{0}(x)=5\exp(15(x-0.5))/(1+\exp(15(x-0.5))-4x$
and $\epsilon(X_{i})$ is a mixture of a Gaussians  with mean $2x-0.6$ and variance $0.3^{3}$, and a log-Gamma distribution with shape $x^{2}+0.5$ and unit scale, with a mixing weight of $\exp(-10(x-0.8)^{2})$ on the Gaussian component. The conditional distribution changes not only in its location and spread, but also in its skewness over its domain.

\subsection{Evaluation metrics}\label{sec:eval-metrics}

Note that in each of the aforementioned simulation settings, the conditional quantiles are computable in closed form. Therefore, to evaluate the quality of estimation of a conditional distribution, we compare the estimated conditional quantiles with those of the true conditional distribution. Consider the Wasserstein-$p$ norms given by
\[
W_{p}(F,\hat{F}\mid \bm{x}) \equiv \left(\int_{0}^{1} \lvert F^{-1}(q\mid \bm{x})-\hat{F}^{-1}(q\mid \bm{x})\rvert^{p}\,\mathrm{d}q\right)^{1/p}
\]
where the limiting $p=\infty$ takes the form of the supremum. We primarily consider the average-case Wasserstein-1 error denoted by $\mathbb{E}_{\bm{X}\sim P_{X}}\left[W_{1}(F,\hat{F}\mid \bm{X})\right]$. We also consider and report additional metrics in the supplementary material, where we allow $p\in\{1,\infty\}$ and consider both the average over the covariate $\bm{X}$ as well as the supremum over $\bm{x}\in\mathcal{X}$. These metrics are computed via a Monte Carlo approximation from samples which we describe in detail below. Results from our experiments in other metrics are available in Section~\ref{sec:additional-metrics}.

\subsection{Methodology}

We proceed by sampling datasets $(\bm{X}_{1},Y_{1}),\ldots,(\bm{X}_{n},Y_{n})\overset{iid}{\sim} P_{\bm{X},Y}$ for each of the aforementioned joint probability measures $P_{\bm{X},Y}$. We obtain $B=1000$ approximate samples from the IQ-BART posterior over the conditional quantile function $\mu$, and take the posterior mean which we denote as $\hat{\mu}_{n}: \mathcal{X}\times (0,1)\to \mathbb{R}$. We approximate the previously mentioned Wasserstein-$p$ norm based error metrics by randomly sampling a grid $U_{1},\ldots,U_{G_{q}}\overset{iid}{\sim}U(0,1)$ as well as a grid $\tilde{\bm{X}}_{1},\ldots,\tilde{\bm{X}}_{G_{x}}\overset{iid}{\sim} P_{X}$. This allows us to compute a Monte-Carlo estimate of $W_{p}(F, F^{-1}\mid \tilde{\bm{X}}_{j})$ for each $\tilde{\bm{X}}_{j}$, which in turn is used for Monte-Carlo estimation of the expectation taken over $\bm{X}\sim P_{\bm{X}}$ or the supremum over $\supp P_{\bm{X}}$. For our experiments, we set $G_{q}=G_{x}=1000$.

We compare the performance of the IQ-BART posterior mean to various other methods for distributional regression, including generalized additive models for shape and scale (GAMLSS) \citep{rigby2005general}, homoscedastic and heteroscedastic BART (BART, H-BART) \citep{chipman2010bart}, density-regression BART (DRBART) \citep{orlandi2021densityregression}, implicit quantile neural networks (IQN) \citep{dabney2018implicit}, and quantile regression forests (QRF) \citep{meinshausen2006quantile}. We also compare against a random forest variant, which we deem implicit quantile forest (IQF), that uses the implicit quantile modeling strategy of \citep{dabney2018implicit} as an ablation study. When using IQ-BART, we estimate conditional quantiles using mean of the posterior predictive distribution. As demonstrated in Section~\ref{sec:plug-in}, this approach consistently and substantially outperforms the alternative plug-in estimator using the posterior mean of the conditional quantile function. We provide a comprehensive description of all experimental details for each methods in Section~\ref{sec:exp-details}.

In our experimentation, we use the same grid of quantile and covariate values for all of the methods in our comparison. We conduct the experiment for $n\in\{100,500,1000\}$, and for each value of $n$, we repeat the experiment five times to account for randomness in the training data itself, the Monte Carlo approximation of our error metrics, and in the construction of the point estimator of the conditional quantile function for each method.
\subsection{Results}

\begin{figure}[!t]
      \centering
      \includegraphics[width=0.7\linewidth]{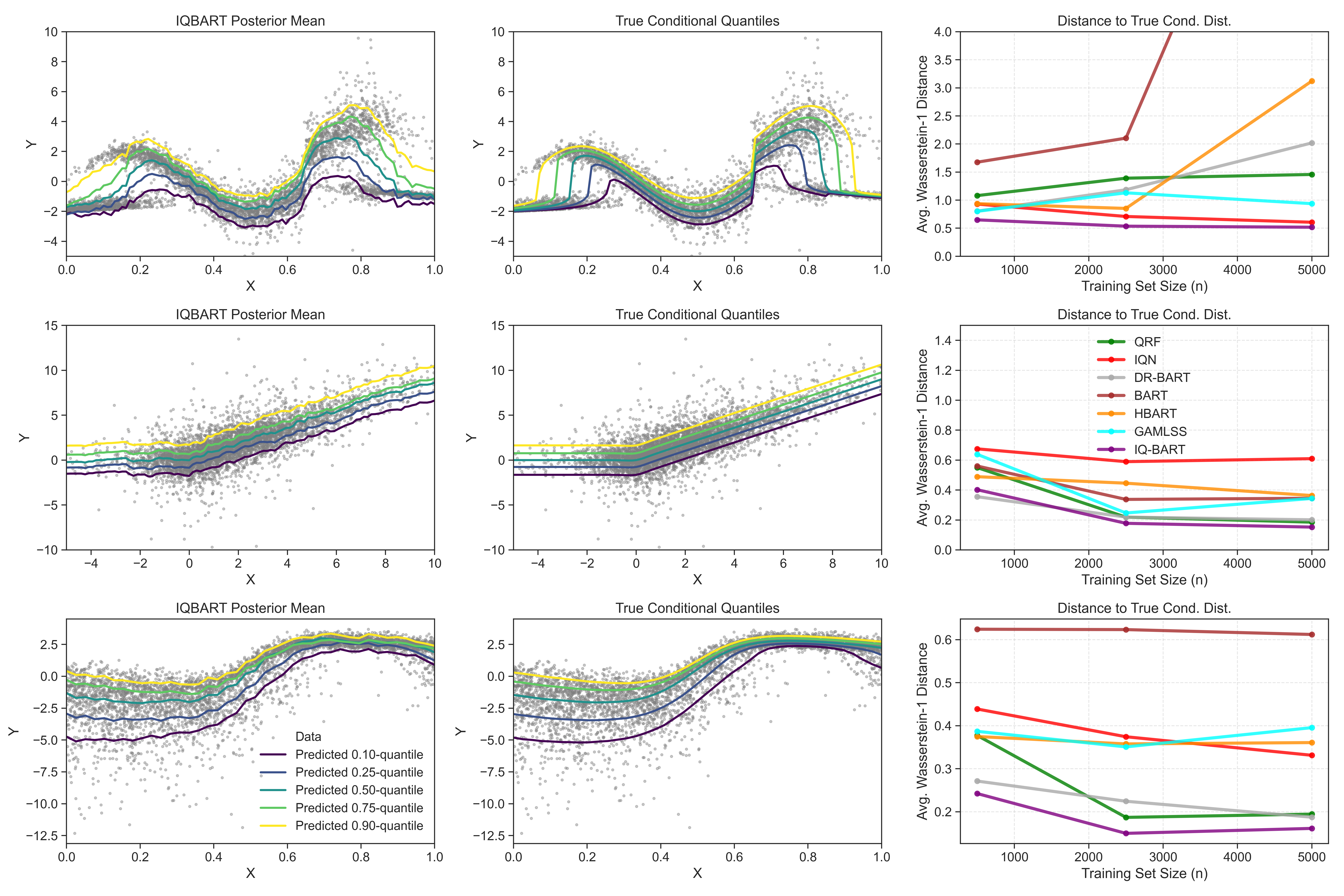}
      \caption{Results on simulated data-generating processes. The left column shows IQ-BART posterior mean estimates of the (0.1,0.25,0.5,0.75,0.9)-conditional quantiles. The center column shows true conditional quantiles. The right column compares methods for $n\in \{500,2500,5000\}$ by the average Wasserstein-$1$ distance to the true conditional distribution, displaying the mean of five replications.}\label{fig:jointfig}
\end{figure}

\begin{table}[t]
\centering
\resizebox{0.6\textwidth}{!}{%
\begin{tabular}{lccc}
\toprule
Model & Difficult Conditional & LSTAR & Cov. Dep. Mixture \\
\midrule
IQ-BART & \textbf{0.517 (0.008)} & \textbf{0.152 (0.027)} & \textbf{0.161 (0.010)} \\
QRF & 1.457 (0.088) & 0.185 (0.003) & 0.195 (0.011) \\
IQN & 0.606 (0.059) & 0.610 (0.271) & 0.331 (0.034) \\
IQF & 1.066 (0.013) & 1.601 (0.018) & 1.087 (0.022) \\
DR-BART & 2.019 (2.655) & 0.201 (0.024) & 0.187 (0.032) \\
BART & 9.813 (10.573) & 0.345 (0.043) & 0.612 (0.042) \\
H-BART & 3.122 (4.211) & 0.363 (0.011) & 0.361 (0.018) \\
GAMLSS & 0.937 (0.207) & 0.344 (0.062) & 0.396 (0.060) \\
\bottomrule
\end{tabular}}\caption{Comparison of models using the average estimated Wasserstein-1 metric across three simulated experiments. We set $n=5000$ for each task and report the mean (and 1.96 standard errors) of each metric from five repetitions. Bold values are within 1.96 SE of the best performance. Refer to Table~\ref{tab:model_comparison_full} for additional metrics.}\label{tab:model_comparison}
\end{table}
The results of our simulation study can be visualized in Figure~\ref{fig:jointfig}. For each of the three simulated data distributions, we visualize the joint distribution via a sample of size $1000$ as well as its true $\tau$-conditional quantiles for $\tau\in \{0.1,0.25,0.5,0.75,0.9\}$. We also show these estimated $\tau$-conditional quantiles via the posterior mean of IQ-BART. Finally, we show the average Wasserstein-$1$ distance to the true conditional distribution as a function of $n$, displaying the mean of three data replications. In Table~\ref{tab:model_comparison}, we display the average values of each metric across five replications, for each method in each simulation setting.  We find that IQ-BART yields salient models for the conditional distribution which result in the lowest estimated average Wasserstein-1 metric with respect to the true conditional distribution for each of these data-generating processes. Refer to Table~\ref{tab:model_comparison_full} in Section~\ref{sec:additional-metrics} for comparisons in addititional metrics.

\section{Applications to Time-Series Forecasting}

In this section, we consider data-generating processes that produce a univariate sequence $Z_{1},Z_{2},\ldots,Z_{T}$ through a $p$-th order Markov mechanism such that
\[
\pi(Z_{t}\mid Z_{t-1},\ldots,Z_{1}) = \pi(Z_{t}\mid Z_{t-1},\ldots,Z_{t-p})\,.
\]
We apply IQ-BART to model the conditional distribution of $Z_{t}\mid Z_{t-1},\ldots, Z_{t-p}$ through the posterior distribution that arises over conditional quantile function $Q(\tau\mid Z_{t-1},\ldots,Z_{t-p})$. In this section, we perform a comprehensive analysis of the predictive ability of the IQ-BART posterior through its ability to provide an interval forecast for the next datapoint in a variety of time-series, assuming a first-order Markov condition. We also employ it to nonparametrically model one-quarter-ahead financial conditions, resulting in informative posterior distributions that help to confirm previous hypotheses regarding multimodality in macroeconomic studies \citep{adrian2019multimodality}.

\subsection{Evaluation Metrics for Forecasting}\label{sec:eval-metrics}

We assess the predictive power of these predictive distributions via the \emph{continuous ranked probability score} (CRPS). For a single  observation $(\bm{X}_{i}, Y_{i})$ and predictive cumulative distribution function (CDF) $F(\cdot\mid\bm{X}_{i})$, the CRPS is defined as:
$$\text{CRPS}(\hat{F}, Y_{i}) = \int_{-\infty}^{\infty} (\hat{F}(z\mid \bm{X}_{i}) - \mathbb{I}\{z \geq y\})^2 dz\,.$$
We construct a Monte Carlo estimate of the CRPS by averaging 100 quantile scores (refer to \citet{gneiting2014probabilistic} for more information on this scoring rule). We note that the CRPS is a proper scoring rule which will be minimized in expectation by the true predictive distribution.

We also evaluate the ability of IQ-BART to construct a forecasts through the \emph{interval score}, which evaluates $1-\alpha$ predictive intervals $(L,U)$ for prediction of the subsequent true value $Y$ via
\[
\mathrm{IS}(Y, L, U) = \left((U-L)+\frac{2}{\alpha}(L-Y)\mathbb{I}\{Y<L\} + \frac{2}{\alpha}(Y-U)\mathbb{I}\{Y>U\}\right)\,.
\]
This is essentially a measure of the width of the predictive interval, with a penalty if the interval does not cover the truth. In a later application where forecasts on series of varying scales is considered, we use a mean-scaled version of the interval score given by
\[
\mathrm{MSIS}(Y_{1},\ldots, Y_{t}, Y_{t+s}, L_{t+s}, U_{t+s}) = \frac{\mathrm{IS}(Y_{t+s}, L_{t+s}, U_{t+s})}{\frac{1}{t-1}\sum_{t'=2}^{t}\lvert Y_{t'}-Y_{t'-1}\rvert}\,.
\]

\subsection{Forecasting Economic and Financial Conditions}

In this section, we apply IQ-BART to model the predictive distribution over the next quarter's real gross domestic product (GDP) quarter-on-quarter (QoQ) growth, as a function of current economic and financial conditions in the United States. We quantify financial conditions by the National Financial Conditions Index (NFCI), is a weekly index compiled by the Chicago Fed that aggregates data from money markets, debt and equity markets, and banking systems to reflect U.S. financial conditions. Economic conditions are quantified by the real gross domestic product (GDP) quarter-on-quarter (QoQ) growth.

Let $Y_{t}$ denote the real GDP QoQ at quarter $t$, and let $X_{t}$ denote the NFCI at quarter $t$. We impose the flexible model $Y_{t+1} = f(Y_{t}, X_{t}, U_{t})$ where $U_{t}\sim \mathcal{U}(0,1)$ and $f$ is a conditional quantile function, signifying a one-step-ahead dependence structure on the previous values $X_{t}$ and $Y_{t}$. We use IQ-BART, placing a BART prior on $f$ and using our augmented data approach \eqref{eq:iqbart-density-fully-augmented} to obtain a posterior distribution over $f$ as a random function given the observed data $X_{1:T}$, $Y_{1:T}$ for some $T>1$. We obtain quarterly NFCI data from the Chicago Fed and quarterly GDP data from the U.S. Bureau of Economic Analysis. The data that we consider ranges from 1972 Q4 to 2024 Q2.

\begin{figure}[t]
  \centering
  \includegraphics[width=0.65\linewidth]{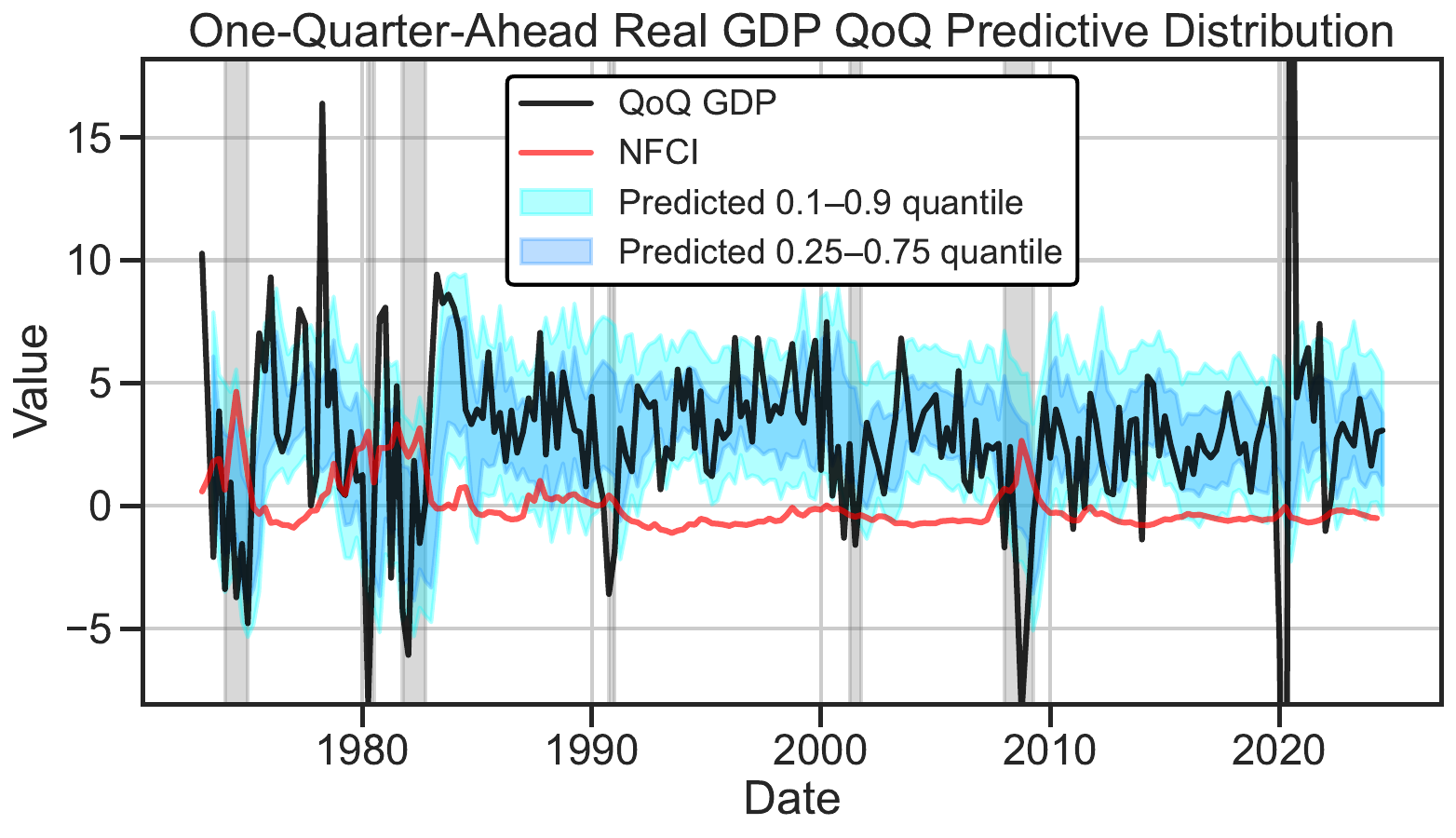}
  \caption{\label{fig:pred-rgdp-qoq} One-quarter-ahead predicted real GDP QoQ from previous quarter economic and financial conditions. Gray vertical bands denote recessions as defined by NBER.}
\end{figure}

In Figure~\ref{fig:pred-rgdp-qoq}, we observe the posterior one-quarter-ahead predictive distributions of the next point induced by IQ-BART, visualized by two credible intervals. Large values in the NFCI indicate a high uncertainty in future economic conditions. We find that the posterior distribution provides a reasonable model for the data, with the 90\% credible interval covering 93.2\% of future real GDP QoQ values, 80\% interval covering 84.4\%, and 50\% interval covering 59\%.

We additionally perform an out-of-sample evaluation of the predictive performance of IQ-BART by holding out the final fifty quarters of our observed data on which to evaluate predictive performance. Thus, our data consists of quarters indexed by $t=1,\ldots,207$ that correspond to dates from Q4 1972 to Q2 2024. For this out-of-sample evaluation, we designate quarters $t=1,\ldots,157$ as training data, and $t=158,159,\ldots,207$ as test data. This corresponds to Q1 2012 marking the start of our testing period. We sample from the approximate posterior distribution on $f$ induced by the IQ-BART model using the Gibbs sampling procedure described in Section~\ref{sec:sampling}. The approximate posterior sample of $f$ induces an approximate sample from the posterior predictive distribution $Y_{t+1}\mid Y_{t},X_{t}$ on the test data indexed by $t=158,\ldots,207$.

\begin{table}[t]
    \centering
    {\scriptsize
    \begin{tabular}{|l|c|c|c|c|c|c|c|c|c|}
        \hline
        Method & IQ-BART & AR(1) & NL-AR(1) & BART & QRF & GAMLSS & IQN & GP \\
        \hline
      CRPS & \textbf{2.95} & 3.06 & 4.04 & 3.03 & 4.44 & 2.99 & 4.55 & 3.08 \\
      IS ($\alpha=0.2$) & 38.54 & 38.82 & 73.26 & 40.81 & 44.23 & 42.57 & 45.31 & \textbf{37.04} \\
      \hline
    \end{tabular}
    }
    \caption{IQ-BART yields salient predictive distributions for one-quarter-ahead real GDP QoQ. We compare the predictive performance in terms of average CRPS and IS ($\alpha=0.2$) over the last 50 held-out datapoints (lower is better).\label{table:comp}}
  \end{table}

  In Table~\ref{table:comp} we analyze the resulting out-of-sample forecasting performance of IQ-BART when compared to various other methods in terms of the CRPS and IS (refer to Section~\ref{sec:eval-metrics}). We find that IQ-BART provides forecasts with the lowest average CRPS among the fifty held-out test points, and the second lowest interval score ($\alpha=0.2$). Our experiments found that flexible methods like IQ-BART, homoscedastic BART, and GAMLSS outperformed the linear $\mathrm{AR}(1)$ model in terms of the CRPS, giving credence to the idea that flexible non-parametric methods better capture the complexities in one-step-ahead forecasting of real GDP QoQ.

\subsection{Multimodality in One-Quarter-Ahead Financial Conditions}

Now, we consider modeling instead the next-quarter NFCI as a function of current economic and financial conditions, changing our response variable. \citet{adrian2019multimodality} find that the conditional distribution over the next-quarter NFCI exhibits complex dynamics that standard point-prediction or autoregressive models may fail to capture, particularly during periods of economic uncertainty. The authors apply a kernelized vector autoregressive model to jointly model economic and financial conditions. Recent work has also applied BART-based methods to time-series forecasting and the analysis of macroeconomic data \citep{huber2020inference, clark2023tailforecasting}. We continue in this direction by employing IQ-BART to model the predictive distribution of next-quarter NFCI. The kernel-based nonparametric model used by \citet{adrian2019multimodality} exhibits a multimodality in the predictive distribution of NFCI in particular time periods of economic uncertainty when the economy may be prone to shocks. We seek to evaluate the performance of IQ-BART in terms of both in-sample and out-of-sample evaluation, as well as to reproduce the multimodal predictive distributions that arise from the kernel-based method.

\begin{figure}[ht]
  \centering
  \begin{subfigure}[t]{0.37\linewidth}
    \centering
    \includegraphics[width=\linewidth]{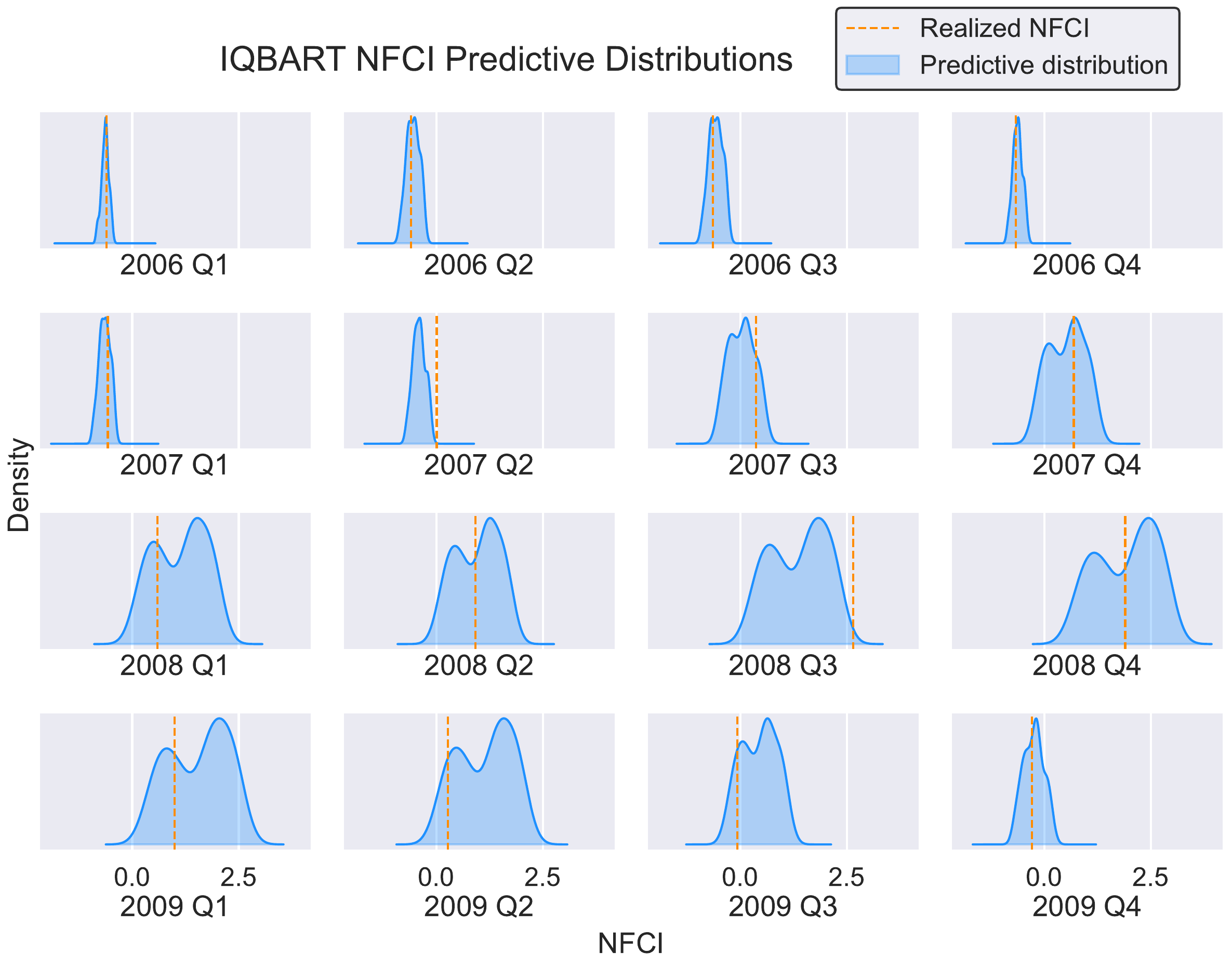}
    \caption{\scriptsize IQ-BART's next-quarter NFCI predictive densities from 2006 -- 2009. Predictive densities become multimodal in 2008, coinciding with the 2008 financial crisis.}
    \label{fig:multimodality-scatter}
  \end{subfigure}
  \hfill
  \begin{subfigure}[t]{0.45\linewidth}
    \centering
    \includegraphics[width=\linewidth]{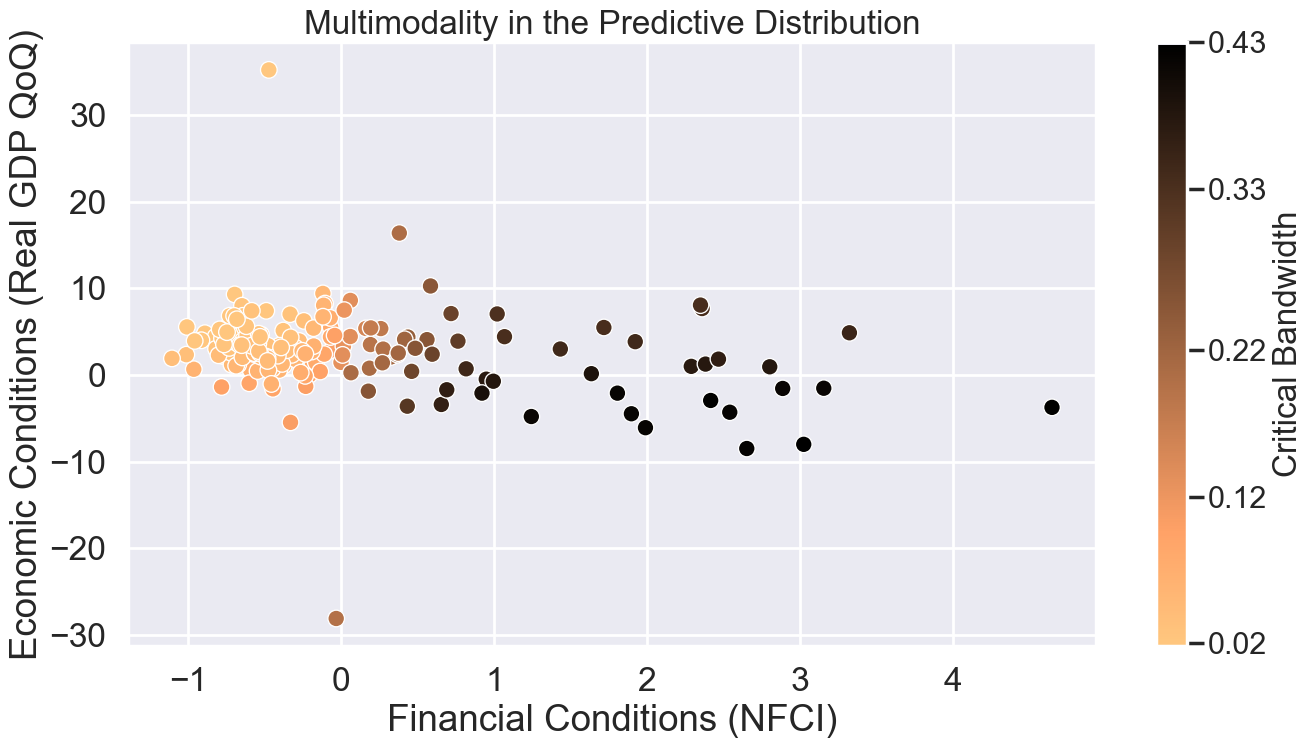}
    \caption{\scriptsize Critical bandwidth needed for a kernel density estimate of the predictive distribution to be unimodal, as a function of GDP QoQ and NFCI values}
    \label{fig:multimodality-silverman}
  \end{subfigure}
  \caption{\label{fig:in-sample-combined} Next-quarter posterior predictive distributions in NFCI arising from IQ-BART exhibit multimodal behavior in periods of poor economic and financial conditions where uncertainty is high.}
\end{figure}

Letting $t$ index quarters from 1973 Q1 to 2024 Q2, we first apply IQ-BART to learn the predictive distribution over $\mathrm{NFCI}_{t}$ as a function of $(\mathrm{NFCI}_{t-1}, \mathrm{GDP}_{t-1})$. Let $\mathcal{D}$ denote the reference table of training data consisting of tuples $(\mathrm{NFCI}_{t}, \mathrm{NFCI}_{t-1}, \mathrm{GDP}_{t-1})$ for each $t$. We sample $B=1000$ approximate samples from the posterior distribution over the conditional quantile function $Q$ using the Gibbs sampling technique describe in Section~\ref{sec:sampling}. We sample from the predictive distribution for $\mathrm{NFCI}_{t}$ by sampling from the random variable $Q(U)$, where $Q\sim \pi(\cdot\C \mathcal{D})$ is drawn from the posterior over the conditional quantile function, and $U\sim \mathcal{U}(0,1)$.

In Figure~\ref{fig:multimodality-scatter}, we visually quantify the uncertainty in the one-quarter-ahead real GDP QoQ forecast by the Silverman critical bandwidth. This represents the minimum bandwidth necessary for a KDE to be unimodal. We find that as the current value of NFCI is large, which indicates tighter financial conditions, the predictive distribution over one-quarter-ahead NFCI is more multimodal, as measured by having a large critical bandwidth. We also observe a slight gradient in terms of economic conditions, where quarters having lower real GDP growth have a tendency to have more multimodality in the next quarter's predicted NFCI as well. We find that this analysis using IQ-BART as an alternative nonparametric forecasting tool matches the conclusions found using the kernel-based method in \citep{adrian2019multimodality} without requiring the choice of a bandwidth for the nonparametric estimator.

\subsection{M4 Forecasting Competition}

\begin{table}[!t]
\centering
{\scriptsize
\begin{tblr}{
  colspec = {lccc},
  stretch = 0.1,
}
\toprule
Method & Mean MSIS & Median MSIS & Proportion Best \\
\midrule
Doornik & 8.16 & 4.53 & 0.39 \\
IQ-BART & 12.11 & 4.83 & 0.37 \\
Trotta & 9.69 & 6.99 & 0.10 \\
ETS & 8.30 & 5.78 & 0.04 \\
Roubinchtein & 12.42 & 7.99 & 0.03 \\
Naive & 9.27 & 5.68 & 0.02 \\
Smyl & 8.24 & 6.58 & 0.02 \\
Fiorucci & 8.66 & 5.74 & 0.01 \\
Ibrahim & 9.18 & 6.48 & 0.01 \\
Petropoulos & 9.19 & 5.72 & 0.01 \\
ARIMA & 9.31 & 5.82 & 0.00 \\
Montero-Manso & 9.15 & 6.89 & 0.00 \\
Segura-Heras & 13.38 & 8.84 & 0.00 \\
\bottomrule
\end{tblr}
}
\caption{IQ-BART yields competitive probabilistic forecasts when only conditioning on one past value. Comparison of MSIS (lower is better) on 100 randomly sampled daily time-series from the M4 Forecasting Competition \citep{makridakis2020m4}. }\label{tab:model_performance}
\end{table}

The M4 forecasting competition was a forecasting challenge on a collection of 100,000 time series of varying temporal spacings (hourly, daily, monthly, etc). We further analyze the predictive performance by subsampling 100 time-series from the set of daily series in the forecasting competition to judge how well the predictive performance of IQ-BART on generic time-dependent data. One objective in the competition, besides point prediction, was to give a 95\% prediction interval for the subsequent points in the time-series. In the competition, error was measured by the \emph{mean scaled interval score} (MSIS), which we define in Section~\ref{sec:eval-metrics}. The objective was to provide a method that would obtain a low average interval score across all time-series in the competition dataset. Following the methodology of~\citet{frazier2025loss}, we preprocess each time-series by differencing it until a KPSS test yields a $p$-value above 0.05, where we cannot reject the null hypothesis that the series is not stationary at a 0.05 confidence threshold. We denote this differenced series as $Y_{1},\ldots,Y_{T}$ and use IQ-BART to model the conditional probability distribution of $Y_{t}\mid Y_{t-1}$. We construct a 95\% credible interval for the single next time-step forecast for each time-series, and evaluate the quality of these intervals in terms of the MSIS. We compare these various methods which performed well in the competition and could be reproduced \citep{makridakis2020m4}. The results can be seen in Table~\ref{tab:model_performance}. We find that while IQ-BART does not perform particularly well in terms of the mean MSIS, its median MSIS is second among all methods in the competition, and enjoys having the smallest MSIS for 37\% of tested series, second also only to that of \citet{doornik2020card}, a multi-stage ensemble forecasting method. Our results are strikingly similar in this way to that of the loss-based variational predictions in~\citep{frazier2025loss}, performing well in terms of the median and proportion of time-series in which our method has the smallest MSIS. We note as well that more than only the single past point $Y_{t-1}$ may be incorporated when forecasting with IQ-BART.

\section{Conclusion}\label{sec:conclusion}
This paper develops a non-parametric quantile learning approach to distributional regression called IQ-BART. Compared to existing BART-related approaches, IQ-BART performs implicit quantile learning through data augmentation and learns the entire conditional quantile function. We investigated posterior mean point estimator under asymmetric Laplace likelihoods as a useful alternative to the routinely used posterior mode (sample) estimator and showed favorable risk properties in small samples. IQ-BART is capable of capturing multi-modal conditional distributions,  furnishes uncertainty quantification through posterior draws, extends naturally to multivariate responses, and exhibits favorable theoretical properties.

\vspace{-6mm}

\setlength{\bibsep}{2.5pt}
\renewcommand{\bibpreamble}{\linespread{0.9}\selectfont}

\bibliography{sources}

\begin{thebibliography}{51}
\providecommand{\natexlab}[1]{#1}
\providecommand{\url}[1]{\texttt{#1}}
\expandafter\ifx\csname urlstyle\endcsname\relax
  \providecommand{\doi}[1]{doi: #1}\else
  \providecommand{\doi}{doi: \begingroup \urlstyle{rm}\Url}\fi

\bibitem[Adrian et~al.(2019)Adrian, Boyarchenko, and
  Giannone]{adrian2019multimodality}
Tobias Adrian, Nina Boyarchenko, and Domenico Giannone.
\newblock Multimodality in macro-financial dynamics.
\newblock Staff Report 903, Federal Reserve Bank of New York, 2019.

\bibitem[Bissiri et~al.(2016)Bissiri, Holmes, and Walker]{bissiri2016general}
P.~G. Bissiri, C.~C. Holmes, and S.~G. Walker.
\newblock A general framework for updating belief distributions.
\newblock \emph{Journal of the Royal Statistical Society. Series B (Statistical
  Methodology)}, 78\penalty0 (5):\penalty0 1103--1130, 2016.

\bibitem[Carlier et~al.(2008)Carlier, Galichon, and
  Santambrogio]{carlier2008knothestransportbreniersmap}
Guillaume Carlier, Alfred Galichon, and Filippo Santambrogio.
\newblock From {K}nothe's transport to {B}renier's map and a continuation
  method for optimal transport, 2008.

\bibitem[Cevid et~al.(2022)Cevid, Michel, N\"{a}f, B\"{u}hlmann, and
  Meinshausen]{cevid2022journal}
Domagoj Cevid, Loris Michel, Jeffrey N\"{a}f, Peter B\"{u}hlmann, and Nicolai
  Meinshausen.
\newblock Distributional random forests: Heterogeneity adjustment and
  multivariate distributional regression.
\newblock \emph{Journal of Machine Learning Research}, 23\penalty0
  (333):\penalty0 1--79, 2022.

\bibitem[Chernozhukov and Hong(2003)]{chernozhukov2003classical}
Victor Chernozhukov and Han Hong.
\newblock An {MCMC} approach to classical estimation.
\newblock \emph{Journal of Econometrics}, 115\penalty0 (2):\penalty0 293--346,
  2003.

\bibitem[Chipman et~al.(1997)Chipman, George, and
  McCulloch]{chipman1997bayesian}
H.~Chipman, E.~I. George, and R.~E. McCulloch.
\newblock {Bayesian CART} model search.
\newblock \emph{Journal of the American Statistical Association}, 93:\penalty0
  935--960, 1997.

\bibitem[Chipman et~al.(2010)Chipman, George, and McCulloch]{chipman2010bart}
Hugh~A. Chipman, Edward~I. George, and Robert~E. McCulloch.
\newblock {BART}: Bayesian additive regression trees.
\newblock \emph{The Annals of Applied Statistics}, 4\penalty0 (1):\penalty0
  266--298, 2010.

\bibitem[Chipman et~al.(2022)Chipman, George, McCulloch, and
  Shively]{chipman2022mbart}
Hugh~A. Chipman, Edward~I. George, Robert~E. McCulloch, and Thomas~S. Shively.
\newblock {mBART: Multidimensional Monotone BART}.
\newblock \emph{Bayesian Analysis}, 17\penalty0 (2):\penalty0 515 -- 544, 2022.

\bibitem[Clark et~al.(2023)Clark, Huber, Koop, Marcellino, and
  Pfarrhofer]{clark2023tailforecasting}
Todd~E. Clark, Florian Huber, Gary Koop, Massimiliano Marcellino, and Michael
  Pfarrhofer.
\newblock Tail forecasting with multivariate {B}ayesian additive regression
  trees.
\newblock \emph{International Economic Review}, 64\penalty0 (3):\penalty0
  979--1022, 2023.

\bibitem[Dabney et~al.(2018)Dabney, Ostrovski, Silver, and
  Munos]{dabney2018implicit}
Will Dabney, Georg Ostrovski, David Silver, and Remi Munos.
\newblock Implicit quantile networks for distributional reinforcement learning.
\newblock In Jennifer Dy and Andreas Krause, editors, \emph{Proceedings of the
  35th International Conference on Machine Learning}, volume~80 of
  \emph{Proceedings of Machine Learning Research}, pages 1096--1105. PMLR,
  10--15 Jul 2018.

\bibitem[Doornik et~al.(2020)Doornik, Castle, and Hendry]{doornik2020card}
Jurgen~A. Doornik, Jennifer~L. Castle, and David~F. Hendry.
\newblock Card forecasts for {M4}.
\newblock \emph{International Journal of Forecasting}, 36\penalty0
  (1):\penalty0 129--134, 2020.
\newblock M4 Competition.

\bibitem[Franses and Dijk(2000)]{Franses_Dijk_2000}
Philip~Hans Franses and Dick~van Dijk.
\newblock \emph{Non-Linear Time Series Models in Empirical Finance}.
\newblock Cambridge University Press, 2000.

\bibitem[Frazier et~al.(2025)Frazier, Loaiza-Maya, Martin, and
  and]{frazier2025loss}
David~T. Frazier, Rubén Loaiza-Maya, Gael~M. Martin, and Bonsoo~Koo and.
\newblock Loss-based variational {B}ayes prediction.
\newblock \emph{Journal of Computational and Graphical Statistics}, 34\penalty0
  (1):\penalty0 84--95, 2025.

\bibitem[Gneiting and Katzfuss(2014)]{gneiting2014probabilistic}
Tilmann Gneiting and Matthias Katzfuss.
\newblock Probabilistic forecasting.
\newblock \emph{Annual Review of Statistics and Its Application}, 1\penalty0
  (Volume 1, 2014):\penalty0 125--151, 2014.

\bibitem[Gneiting and Raftery(2007)]{gneiting2007strictly}
Tilmann Gneiting and Adrian~E Raftery.
\newblock Strictly proper scoring rules, prediction, and estimation.
\newblock \emph{Journal of the American Statistical Association}, 102\penalty0
  (477):\penalty0 359--378, 2007.

\bibitem[Gutmann et~al.(2017)Gutmann, Dutta, Kaski, and Corander]{gutmann}
M.~Gutmann, R.~Dutta, S.~Kaski, and J.~Corander.
\newblock Likelihood-free inference via classification.
\newblock \emph{Statistical Computing}, 28:\penalty0 411--425, 2017.

\bibitem[Hjort and Petrone(2007)]{hjort2007nonparametric}
N.~L. Hjort and S.~Petrone.
\newblock Nonparametric quantile inference using {Dirichlet} processes.
\newblock In V.~Nair, editor, \emph{Advances in Statistical Modeling and
  Inference: Essays in Honor of {Kjell A. Doksum}}, pages 463--492. World
  Scientific, Singapore, 2007.

\bibitem[Hjort and Walker(2009)]{hjort2009quantile}
N.~L. Hjort and S.~G. Walker.
\newblock Quantile pyramids for {Bayesian} nonparametrics.
\newblock \emph{The Annals of Statistics}, 37:\penalty0 105--131, 2009.

\bibitem[Huber and Rossini(2020)]{huber2020inference}
Florian Huber and Luca Rossini.
\newblock {Inference in Bayesian Additive Vector Autoregressive Tree Models}.
\newblock Papers 2006.16333, arXiv.org, June 2020.

\bibitem[Jiang and Tanner(2008)]{jiang2008gibbs}
Wenxin Jiang and Martin~A. Tanner.
\newblock Gibbs posterior for variable selection in high-dimensional
  classification and data mining.
\newblock \emph{The Annals of Statistics}, 36\penalty0 (5):\penalty0
  2207--2231, 2008.

\bibitem[Kim et~al.(2025)Kim, Zhai, and
  Ročková]{kim2025deepgenerativequantilebayes}
Jungeum Kim, Percy~S. Zhai, and Veronika Ročková.
\newblock Deep generative quantile {B}ayes, 2025.

\bibitem[Kindo et~al.(2016)Kindo, Wang, Hanson, and
  Peña]{kindo2016bayesianquantileadditiveregression}
Bereket~P. Kindo, Hao Wang, Timothy Hanson, and Edsel~A. Peña.
\newblock Bayesian quantile additive regression trees, 2016.

\bibitem[Kleijn and van~der Vaart(2006)]{kleijn2006misspecification}
B.~J.~K. Kleijn and A.~W. van~der Vaart.
\newblock {Misspecification in infinite-dimensional Bayesian statistics}.
\newblock \emph{The Annals of Statistics}, 34\penalty0 (2):\penalty0 837 --
  877, 2006.

\bibitem[Kottas and Krnjaji{\'c}(2009)]{kottas2009bayesian}
A.~Kottas and M.~Krnjaji{\'c}.
\newblock Bayesian nonparametric modeling in quantile regression.
\newblock \emph{Scandinavian Journal of Statistics}, 36\penalty0 (2):\penalty0
  297--319, 2009.

\bibitem[Kundu and Dunson(2014)]{kundu2014latent}
S.~Kundu and D.~B. Dunson.
\newblock Latent factor models for density estimation.
\newblock \emph{Biometrika}, 101\penalty0 (3):\penalty0 641--654, 2014.

\bibitem[Lakshminarayanan et~al.(2015)Lakshminarayanan, Roy, and
  Teh]{lakshminarayanan2015particle}
Balaji Lakshminarayanan, Daniel~M. Roy, and Yee~Whye Teh.
\newblock Particle {G}ibbs for {B}ayesian additive regression trees.
\newblock \emph{ArXiv}, abs/1502.04622, 2015.

\bibitem[Li et~al.(2023)Li, Linero, and Murray]{li2023adaptive}
Yinpu Li, Antonio~R. Linero, and Jared Murray.
\newblock Adaptive conditional distribution estimation with {B}ayesian decision
  tree ensembles.
\newblock \emph{Journal of the American Statistical Association}, 118\penalty0
  (543):\penalty0 2129--2142, 2023.

\bibitem[Linero and Yang(2017)]{linero2018}
A.~Linero and Y.~Yang.
\newblock Bayesian regression tree ensembles that adapt to smoothness and
  sparsity.
\newblock \emph{Journal of the Royal Statistical Society. Series B.}, 2017.

\bibitem[Makridakis et~al.(2020)Makridakis, Spiliotis, and
  Assimakopoulos]{makridakis2020m4}
Spyros Makridakis, Evangelos Spiliotis, and Vassilios Assimakopoulos.
\newblock {The M4 Competition: 100,000 time series and 61 forecasting methods}.
\newblock \emph{International Journal of Forecasting}, 36\penalty0
  (1):\penalty0 54--74, 2020.

\bibitem[Martin and Syring(2022)]{martin22_direc_gibbs_poster_infer_risk_minim}
Ryan Martin and Nicholas Syring.
\newblock Direct {G}ibbs posterior inference on risk minimizers: Construction,
  concentration, and calibration.
\newblock \emph{CoRR}, 2022.

\bibitem[Meinshausen(2006)]{meinshausen2006quantile}
Nicolai Meinshausen.
\newblock Quantile regression forests.
\newblock \emph{Journal of Machine Learning Research}, 7:\penalty0 983--999,
  2006.

\bibitem[Narayan et~al.(2024)Narayan, Wang, Canini, and
  Gupta]{narayan2024expected}
Taman Narayan, Serena~Lutong Wang, Kevin~Robert Canini, and Maya Gupta.
\newblock Expected pinball loss for quantile regression and inverse {CDF}
  estimation.
\newblock \emph{Transactions on Machine Learning Research}, 2024.

\bibitem[Nie and Ročková(2023)]{nie2023deep}
Lizhen Nie and Veronika Ročková.
\newblock Deep bootstrap for {B}ayesian inference.
\newblock \emph{Philosophical Transactions of the Royal Society A:
  Mathematical, Physical and Engineering Sciences}, 381\penalty0
  (2247):\penalty0 20220154, 2023.

\bibitem[Orlandi et~al.(2021)Orlandi, Murray, Linero, and
  Volfovsky]{orlandi2021densityregression}
Vittorio Orlandi, Jared Murray, Antonio Linero, and Alexander Volfovsky.
\newblock Density regression with {B}ayesian additive regression trees.
\newblock \emph{CoRR}, 2021.

\bibitem[Polson and Sokolov(2023)]{polson23_gener_ai_bayes_comput}
Nicholas~G. Polson and Vadim Sokolov.
\newblock Generative {AI} for {B}ayesian computation.
\newblock \emph{CoRR}, 2023.

\bibitem[Quiroga et~al.(2022)Quiroga, Garay, Alonso, Loyola, and
  Martin]{quiroga2022bart}
Miriana Quiroga, Pablo~G Garay, Juan~M. Alonso, Juan~Martin Loyola, and
  Osvaldo~A Martin.
\newblock Bayesian additive regression trees for probabilistic programming,
  2022.

\bibitem[Rigby and Stasinopoulos(2005)]{rigby2005general}
R.~A. Rigby and D.~M. Stasinopoulos.
\newblock Generalized additive models for location, scale and shape.
\newblock \emph{Journal of the Royal Statistical Society Series C: Applied
  Statistics}, 54\penalty0 (3):\penalty0 507--554, 04 2005.

\bibitem[Ro{\v{c}}kov{\'a} and Van~der Pas(2020)]{rockova2020posterior}
Veronika Ro{\v{c}}kov{\'a} and Stephanie Van~der Pas.
\newblock Posterior concentration for {B}ayesian regression trees and forests.
\newblock \emph{The Annals of Statistics}, 48\penalty0 (4):\penalty0
  2108--2131, 2020.

\bibitem[Ro\v{c}kov\'{a} and Saha(2019)]{rockova_saha}
V.~Ro\v{c}kov\'{a} and E.~Saha.
\newblock On theory for {BART}.
\newblock \emph{JMLR: Workshop and Conference Proceedings: 22nd International
  Conference on Artificial Intelligence and Statistics}, 89:\penalty0
  2839--2848, 2019.

\bibitem[Ruppert and Matteson(2015)]{ruppert2015statistics}
David Ruppert and David~S. Matteson.
\newblock \emph{Statistics and Data Analysis for Financial Engineering: with
  {R} examples}.
\newblock Springer Texts in Statistics. Springer, New York, NY, 2 edition, 4
  2015.
\newblock Hardcover ISBN: 978-1-4939-2613-8, Softcover ISBN: 978-1-4939-5173-4,
  eBook ISBN: 978-1-4939-2614-5, Series ISSN: 1431-875X, Series E-ISSN:
  2197-4136.

\bibitem[Shen and Ghosal(2015)]{shen2015adaptive}
Weining Shen and Subhashis Ghosal.
\newblock Adaptive {B}ayesian procedures using random series priors.
\newblock \emph{Scandinavian Journal of Statistics}, 42\penalty0 (4):\penalty0
  1194--1213, 2015.

\bibitem[Shen and Meinshausen(2024)]{shen2024distributional}
Xinwei Shen and Nicolai Meinshausen.
\newblock Engression: extrapolation through the lens of distributional
  regression.
\newblock \emph{Journal of the Royal Statistical Society Series B: Statistical
  Methodology}, page qkae108, 11 2024.

\bibitem[Stone(1982)]{stone_82}
C.~Stone.
\newblock Optimal global rates of convergence for nonparametric regression.
\newblock \emph{Annals of Statistics}, 10:\penalty0 1040--1053, 1982.

\bibitem[Syring and Martin(2018)]{syring2018calibrating}
Nicholas Syring and Ryan Martin.
\newblock Calibrating general posterior credible regions.
\newblock \emph{Biometrika}, 106\penalty0 (2):\penalty0 479–486, December
  2018.

\bibitem[Syring and Martin(2020)]{syring20_gibbs_poster_concen_rates_under}
Nicholas Syring and Ryan Martin.
\newblock Gibbs posterior concentration rates under sub-exponential type
  losses.
\newblock \emph{CoRR}, 2020.

\bibitem[Taddy and Kottas(2010)]{taddy2010bayesian}
Matthew~A. Taddy and Athanasios Kottas.
\newblock {A Bayesian Nonparametric Approach to Inference for Quantile
  Regression}.
\newblock \emph{Journal of Business \& Economic Statistics}, 28\penalty0
  (3):\penalty0 357--369, July 2010.

\bibitem[Tsionas(2003)]{tsionas2003bayesianquantileinference}
Efthymios~G. Tsionas.
\newblock Bayesian quantile inference.
\newblock \emph{Journal of Statistical Computation and Simulation}, 73\penalty0
  (9):\penalty0 659--674, 2003.

\bibitem[Yee et~al.(2025)Yee, Ghosh, and
  Deshpande]{yee2025scalablepiecewisesmoothingbart}
Ryan Yee, Soham Ghosh, and Sameer~K. Deshpande.
\newblock Scalable piecewise smoothing with bart, 2025.

\bibitem[Yu and Moyeed(2001)]{yu2001bayesianquantileregression}
Keming Yu and Rana~A. Moyeed.
\newblock Bayesian quantile regression.
\newblock \emph{Statistics \& Probability Letters}, 54\penalty0 (4):\penalty0
  437--447, 2001.

\bibitem[Zhang(2006{\natexlab{a}})]{zhang2006from}
Tong Zhang.
\newblock {From $\epsilon$-entropy to KL-entropy: Analysis of minimum
  information complexity density estimation}.
\newblock \emph{The Annals of Statistics}, 34\penalty0 (5):\penalty0 2180 --
  2210, 2006{\natexlab{a}}.

\bibitem[Zhang(2006{\natexlab{b}})]{zhang2006information}
Tong Zhang.
\newblock Information-theoretic upper and lower bounds for statistical
  estimation.
\newblock \emph{IEEE Transactions on Information Theory}, 52\penalty0
  (4):\penalty0 1307--1321, 2006{\natexlab{b}}.

\end{thebibliography}

\appendix

\section{Experimental Details}\label{sec:exp-details}

\subsection{Difficult Conditional}\label{subsec:diff-cond-full}

We detail the data-generating process for the difficult conditional example. The covariate is drawn via $X\sim \mathcal{U}(0,1)$. The conditional model for $Y$ given $X$ is a covariate-dependent mixture, with both mixture weights and mixture component parameters depending on $X$. Explicitly, conditionally on $X=x$, $Y$ takes the form
\[
  Y = \begin{cases}
   Z_{1}(x) & \text{w.p. } w_{1}(x) \\
   Z_{2}(x) & \text{w.p. } w_{2}(x) \\
   Z_{3}(x) & \text{w.p. } w_{3}(x)
  \end{cases}
\]

The mixture weights $w_k(x) = \frac{\exp(l_k(x))}{\sum_{j=1}^3 \exp(l_j(x))}$ are derived from logits $l_1(x)=5-20|x-0.2|$, $l_2(x)=5-20|x-0.55|$, and $l_3(x)=5-20|x-0.85|$.

The first component is given by
\[
Z_{1}(x)\sim \mathrm{SkewNormal}(\alpha_{1}(x), \xi_{1}(x), \sigma_{1}(x))
\]
with $\alpha_{1}(x)=20(1+\exp(-15(x-0.25)))^{-1}$, $\sigma_{1}(x)=0.1+3(1+\exp(-10(x-0.15)))^{-1}$, and $\xi_{1}(x) = 1+\sin(0.8\pi\,x)-\sigma_{1}(x)\frac{\alpha_{1}(x)}{\sqrt{1+\alpha_{1}(x)^{2}}}\sqrt{\frac{2}{\pi}}$.

The second component $Z_{2}$ is a mixture of two $t$ distributions with probability 0.5 each. These are given by $t(1+\sin(0.8\pi\,x)\pm \mathrm{gap}(x), \sigma_{2}(x), 1.2)$ where $\mathrm{gap}(x)=6(1+\exp(-15(x-0.4)))^{-1}$ and $\sigma_{2}(x)=0.1+0.2(1+\exp(-10(x-0.3)))^{-1}$.

Finally, the third component $Z_{3}$ is a mixture of a Beta and a t distribution given by
\[
Z_{3}(x) = 0.85\left(\sigma_{B}(x)(B+\lambda_{B}(x))\right) + 0.15\,t(\mu_{t}(x), \sigma_{t}(x), 1.5)
\]
where $B\sim\mathrm{Beta}(a_{B}(x), b_{B}(x))$. The relevant parameter functions are given by $a_B(x) = \max(0.05, 0.2+0.3\sin(2.5\pi(x-0.6)))$, $b_B(x) = \max(0.05, 0.2+0.3\cos(2.5\pi(x-0.6)))$, $\sigma_B(x) = \max(0.5, 2+2\sin(2\pi(x-0.6)))$, and $\lambda_{B}(x)$ set such that the mean of the mixture component is $1+\sin(0.8\pi\,x)-3(1+\exp(-10(x-0.7) ))^{-1}$.
The Student $t$ sub-component has location $\mu_t(x) = 1+\sin(0.8\pi\,x)-3(1+\exp(-10(x-0.7) ))^{-1}+3+2\sin(3\pi(x-0.6))$, and scale $\sigma_t(x) = 0.3+0.2(1+\exp(-15(x-0.8)))^{-1}$.

\subsection{Simulation Study}
We detail the experimental hyperparameters for each method used in the simulation study.

\paragraph{Quantile Regression Forests} For our experiments, we use a random forest with 200 trees and a maximum depth of 5 for each tree. The random forest implementation employs bootstrap sampling of the training data for each tree, meaning each tree is trained on a random subset of data points sampled with replacement. Additionally, each tree uses a random subset of features, specifically 80\% of the available features are randomly selected for each tree (though in simulated data experiments, we have just one feature). The algorithm utilizes mean aggregation to combine predictions from individual trees. Each decision tree in the forest optimizes splits based on variance reduction, with a minimum decrease threshold of 0.01 controlling the tree growth (splits must reduce variance by at least this amount to be considered valid). For each tree, the conditional distribution at a point $\bm{x}$ is estimated by the empirical distribution of the training responses falling into the same leaf node as $\bm{x}$. Quantiles are then calculated from this empirical distribution and averaged across trees in the forst (see~\citep{meinshausen2006quantile}).

\paragraph{Implicit Quantile Neural Network} We use an Implicit Quantile Network (IQN) \citep{dabney2018implicit} with a feed-forward neural network architecture consisting of 3 hidden layers with [128, 128, 128] units respectively, each followed by leaky ReLU activation functions. The network is trained to minimize the implicit quantile regression loss after augmenting the data with uniformly sampled quantile values for each datapoint.
The training process uses the Adam optimizer with a learning rate of 1e-3, running for 20 epochs with a batch size of 64. We sample one quantile value per data point during training.

\paragraph{Implicit Quantile Random Forest} We also employ a standard random forest counterpart of our method, which performs the same data augmentation scheme as in the implicit quantile network to pair each datapoint with a uniformly sampled quantile value. Then, the tree-growing criterion is based on the reduction of the implicit quantile regression loss. This method is included mainly for the purposes of ablation.

\paragraph{Implicit Quantile BART} For IQ-BART, we use 1000 trees with $\alpha=0.95$ and $\beta=2.0$. We sample the BART components using a particle Gibbs sampler~\citep{lakshminarayanan2015particle}, as implemented for probabilistic programming using \texttt{pymc}~\citep{quiroga2022bart}. We use 10 quantile repetitions per datum using the data-augmentation scheme corresponding to~\eqref{eq:iqbart-density-fully-augmented} and use a fixed learning rate of 1. We note that tuning the learning rate can potentially further improve results (see Section~\ref{sec:plug-in}).

\paragraph{Density Regression BART} We use the default implementation provided by \citet{orlandi2021densityregression}. We draw 1000 burn-in samples and 1000 posterior samples for density regression BART, using a grid size of 1000 for the response variable.


\paragraph{BART} We use 200 trees, with $\alpha=0.95$ and $\beta=2.0$, using the particle Gibbs sampler~\citep{lakshminarayanan2015particle}. We use homoscedastic Gaussian errors with a half-normal prior with scale 5. We sample 1000 burn-in samples and sample 1000 draws from the posterior. Conditional quantiles are infered via the Gaussian quantile function.

\paragraph{Heteroscedastic BART} For heteroscedastic BART, we model both the mean and the log-variance of a Gaussian distribution as functions of the input features. We use 50 trees for each BART model (one for the mean and one for the log-variance), with tree prior parameters $\alpha=0.95$ and $\beta=2.0$. We sample 1000 burn-in samples and draw 1000 posterior samples. Trees are sampled using the particle Gibbs sampler~\citep{lakshminarayanan2015particle}. Conditional quantiles are infered by first computing the posterior mean over the mean and variance and then using the Gaussian quantile function.

\paragraph{GAMLSS} We use a normal distribution family with penalized B-splines to model both the mean and standard deviation as functions of the input features. The model is trained with a maximum of 50 iterations to ensure convergence. We use the R GAMLSS package. Conditional quantiles are infered via the Gaussian quantile function.

\subsection{Predicting Economic Conditions}
We compare against a linear autoregressive model using only the previous values for real GDP QoQ and NFCI, denoted by AR(1). We also compare against a kernel-based nonlinear autoregressive model using the specifications in \citep{adrian2019multimodality}, denoted as NL-AR(1). G-BART designates a BART model for the conditional mean as a sum of trees with homoscedastic Gaussian noise. QRF designates quantile regression forests \citep{meinshausen2006quantile}. GP denotes a Gaussian process with squared exponential kernel. GAMLSS denotes a Gaussian additive model for shape and scale. Finally, IQN denotes an implicit quantile neural network using a four layer feedforward architecture.

\section{Additional Experiments}

\subsection{Additional Metrics in Simulation Study}\label{sec:additional-metrics}

Recall the Wasserstein-$p$ norms given by
\[
W_{p}(F,\hat{F}\mid x) \equiv \left(\int_{0}^{1} \lvert F^{-1}(q\mid x)-\hat{F}^{-1}(q\mid x)\rvert^{p}\,\mathrm{d}q\right)^{1/p}\,,
\]
where the limiting $p=\infty$ takes the form of the supremum. In the main text, we primarily consider the average-case taken over both the covariate and quantile. In this supplementary section, we consider $p\in\{1,\infty\}$, and consider the average-case error $\mathbb{E}_{X\sim P_{X}}\left[W_{p}(F,\hat{F}\mid X)\right]$ as well as the worst-case error $\sup_{x\in\mathcal{X}}W_{p}(F,\hat{F}\mid x)$. Choosing $p\in\{1,\infty\}$ and either the average or supremum allows us to capture four distinct error metrics: the average ($p=1$) and worst-case ($p=\infty$) errors with respect to quantile levels, each evaluated either in expectation over the feature space or at the worst-case feature point. This provides a comprehensive assessment of estimation quality across different dimensions of robustness.

\begin{table}[!ht]
\centering
\resizebox{\textwidth}{!}{%
\begin{tabular}{lcccccccccccc}
\toprule
& \multicolumn{4}{c}{Difficult} & \multicolumn{4}{c}{LSTAR} & \multicolumn{4}{c}{Mixture} \\
\cmidrule(lr){2-5} \cmidrule(lr){6-9} \cmidrule(lr){10-13}
Model & $\sup W_\infty$ & $\sup W_1$ & $\mathbb{E}[W_\infty]$ & $\mathbb{E}[W_1]$ & $\sup W_\infty$ & $\sup W_1$ & $\mathbb{E}[W_\infty]$ & $\mathbb{E}[W_1]$ & $\sup W_\infty$ & $\sup W_1$ & $\mathbb{E}[W_\infty]$ & $\mathbb{E}[W_1]$ \\
\midrule
IQ-BART & \textbf{4.30 (0.145)} & \textbf{1.36 (0.023)} & \textbf{1.85 (0.214)} & \textbf{0.52 (0.008)} & 1.11 (0.236) & \textbf{0.39 (0.113)} & 0.68 (0.159) & \textbf{0.15 (0.027)} & 2.96 (0.301) & \textbf{0.54 (0.042)} & 1.41 (0.223) & \textbf{0.16 (0.010)} \\
QRF & 5.97 (0.064) & 2.97 (0.107) & 3.88 (0.207) & 1.46 (0.088) & 3.61 (1.150) & 1.00 (0.250) & 0.91 (0.148) & 0.19 (0.003) & 3.38 (0.480) & 0.92 (0.174) & 1.18 (0.165) & 0.19 (0.011) \\
IQN & 5.79 (1.669) & 1.55 (0.270) & 2.10 (0.092) & 0.61 (0.059) & 3.40 (0.319) & 1.26 (0.670) & 2.47 (0.347) & 0.61 (0.271) & 3.27 (1.072) & 0.75 (0.136) & 1.26 (0.152) & 0.33 (0.034) \\
IQF & 6.76 (0.013) & 2.80 (0.033) & 2.97 (0.074) & 1.07 (0.013) & 7.80 (0.066) & 4.33 (0.049) & 2.91 (0.076) & 1.60 (0.018) & 3.31 (0.524) & 1.55 (0.101) & 1.96 (0.042) & 1.09 (0.022) \\
DR-BART & 8.92 (5.231) & 3.53 (2.854) & 5.51 (6.065) & 2.02 (2.655) & \textbf{1.02 (0.323)} & 0.64 (0.290) & \textbf{0.44 (0.090)} & 0.20 (0.024) & \textbf{2.59 (0.309)} & 0.64 (0.214) & \textbf{0.84 (0.033)} & 0.19 (0.032) \\
BART & 30.99 (31.089) & 11.05 (10.938) & 28.53 (30.553) & 9.81 (10.573) & \textbf{0.93 (0.152)} & \textbf{0.42 (0.076)} & 0.69 (0.074) & 0.34 (0.043) & 4.28 (0.396) & 0.96 (0.152) & 2.79 (0.078) & 0.61 (0.042) \\
H-BART & 65.09 (99.937) & 22.10 (34.954) & 9.52 (12.729) & 3.12 (4.211) & 2.40 (1.213) & 1.07 (0.511) & 0.94 (0.141) & 0.36 (0.011) & 3.16 (0.337) & 0.74 (0.220) & 1.80 (0.154) & 0.36 (0.018) \\
GAMLSS & 8.29 (2.751) & 2.85 (0.271) & 2.68 (0.857) & 0.94 (0.207) & 1.14 (0.488) & \textbf{0.46 (0.155)} & 0.73 (0.184) & 0.34 (0.062) & 3.16 (0.417) & 0.69 (0.156) & 1.82 (0.101) & 0.40 (0.060) \\
\bottomrule
\end{tabular}
}\caption{Comparison of models using the Wasserstein metrics across three simulated experiments. We set $n=5000$ for each task and report the mean (and 1.96 standard errors) of each metric from five repetitions. Bold values are within 1.96 SE of the best performance.}\label{tab:model_comparison_full}
\end{table}

Table~\ref{tab:model_comparison_full} displays a more detailed comparison than Table~\ref{tab:model_comparison} on the three simulated data-generating processes in Section~\ref{sec:simulation-study}. This table includes all of the aforementioned four metrics for a more detailed comparison.

\subsection{Learning Rate Comparison}

\begin{figure}[h!t]
  \centering
\includegraphics[width=\linewidth]{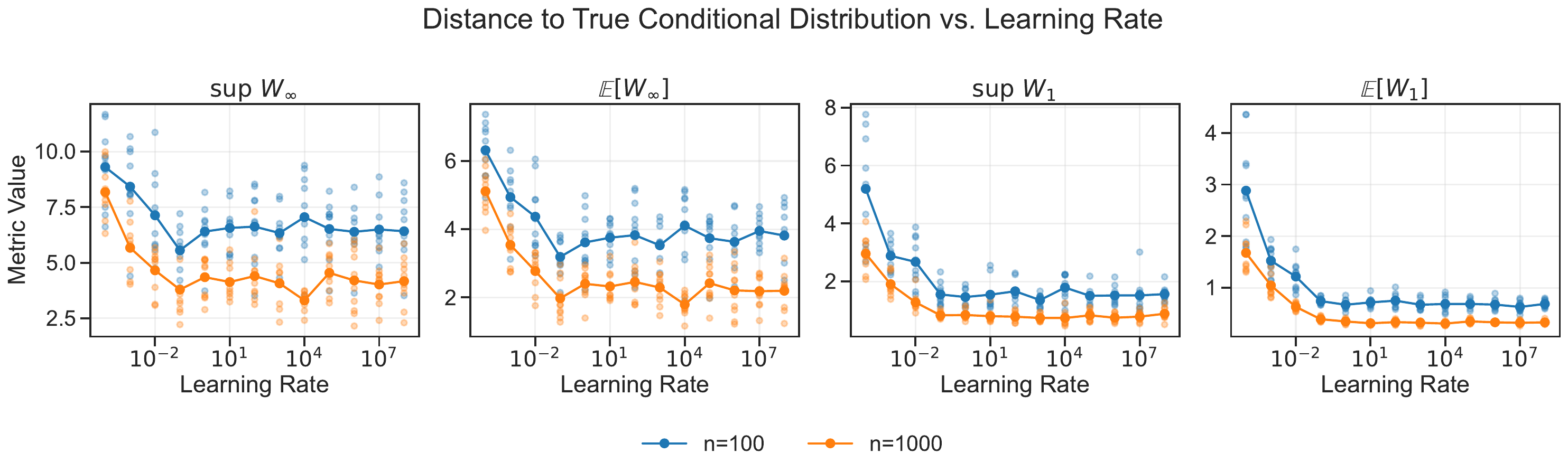}
  \caption{\label{fig:lr-comp}Comparison of modeling quality in terms of the four metrics described in Section~\ref{sec:eval-metrics} for varying values of the learning rate $\lambda$.}
\end{figure}

Figure~\ref{fig:lr-comp} displays the quality of modeling the conditional distribution by the IQ-BART posterior mean as a function of the learning rate $\lambda$. We evaluate the quality in terms of the four metrics described in Section~\ref{sec:eval-metrics}. These experiments are performed using $n=1000$ samples from the covariate-dependent mixture joint distribution (see Section~\ref{sec:data-gen-proc}). For each learning rate value we perform ten repeated experiments, and we display the mean value of each metric. Our general conclusion is that provided the learning rate is sufficiently large (i.e. at least 0.1), we do not notice a strong effect in terms of the modeling quality.

\subsection{Quantile Repetition Comparison}

\begin{figure}[h!t]
  \centering
\includegraphics[width=\linewidth]{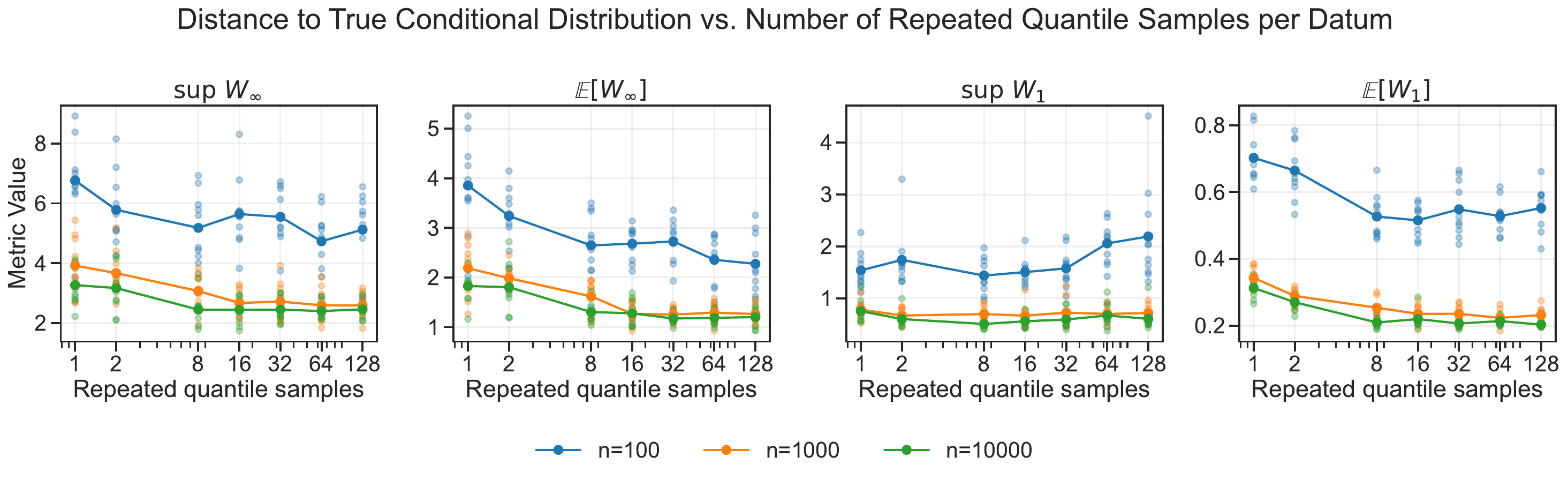}
  \caption{\label{fig:n-rep-comp}Comparison of modeling quality in terms of the four metrics described in Section~\ref{sec:eval-metrics} for varying values of the number of quantile repetitions per datum.}
\end{figure}

Figure~\ref{fig:n-rep-comp} displays the quality of modeling the conditional distribution by the IQ-BART posterior mean as a function of the number of quantile repetitions per data point. We evaluate the quality in terms of the four metrics described in Section~\ref{sec:eval-metrics}. These experiments are performed using $n\in\{10^{2},10^{3},10^{4}\}$ samples from the covariate-dependent mixture joint distribution (see Section~\ref{sec:data-gen-proc}). For each combination of $n$ and a number of repetitions $r$, we perform ten repeated experiments. We display the mean value of each metric as a separate line corresponding to each $n$. We generally conclude that increasing the number of quantile repetitions improves performance.

\subsection{Posterior predictive vs plug-in}\label{sec:plug-in}

\begin{table}[t]
\centering
\resizebox{\textwidth}{!}{%
\begin{tabular}{llccc}
\toprule
Learning Rate & Method & Difficult & LSTAR & Mixture \\
\midrule
0.1 & Posterior mean plug-in & 0.611 (0.040) & 0.231 (0.019) & 0.239 (0.017) \\
 & Posterior predictive mean & 0.519 (0.007) & 0.137 (0.009) & 0.165 (0.014) \\
  \midrule
1 & Posterior mean plug-in & 0.593 (0.036) & 0.199 (0.022) & 0.207 (0.017) \\
 & Posterior predictive mean & 0.517 (0.008) & 0.152 (0.027) & 0.161 (0.010) \\
\midrule
10 & Posterior mean plug-in & 0.589 (0.018) & 0.201 (0.023) & 0.208 (0.013) \\
 & Posterior predictive mean & 0.519 (0.009) & 0.151 (0.015) & 0.150 (0.010) \\
\midrule
100 & Posterior mean plug-in & 0.582 (0.024) & 0.225 (0.026) & 0.211 (0.020) \\
 & Posterior predictive mean & 0.526 (0.012) & 0.160 (0.028) & 0.151 (0.010) \\
\bottomrule
\end{tabular}
}
\caption{Performance comparison of posterior predictive mean vs plug-in estimation from the posterior mean for IQ-BART. We display the mean (1.96 standard errors) of the $\mathbb{E}[W_{1}]$ metric (refer to Section~\ref{sec:eval-metrics}) from five trials using IQ-BART on our three example data-generating processes in Section~\ref{sec:data-gen-proc}. We also compare for different learning rate values $\lambda^{-1}\in\{0.1,1,10,100\}$.}\label{tab:iqbart_comprehensive}
\end{table}

In Table~\ref{tab:iqbart_comprehensive}, we investigate the performance discrepancy between using the posterior predictive mean to estimate conditional quantile values versus using the plug-in estimator from the posterior mean over the conditional quantile function. We also repeat this comparison for different learning rate values, and for each of our example data-generating processes. We conclude that estimation via the mean of the posterior predictive distribution over the quantile uniformly outperforms plug-in estimation by the posterior mean, and to a very substantial degree.

\section{Theory}

\subsection{Notation and Definitions}

We let $\mathcal{V}^{K}$ denote the set of tree-structured step functions on $\mathcal{X}\times (0,1)\to \mathbb{R}$ with $K$ terminal nodes. We use the notation $\lesssim$ and $\gtrsim$ to denote inequality up to a universal constant. Given a function $g:A\times B\to C$ and any fixed $b\in B$, we use $g(\cdot,b)$ to denote the partial application from $A$ to $C$ given by $a\mapsto g(a,b)$.

\begin{defn}

  We define the \emph{single quantile loss} $\ell(\bm{X},Y;\tau)$ as the usual check-loss in estimation of the single conditional quantile $\tau$ by the $\tau$-conditional quantile candidate $h:\mathcal{X}\to\mathbb{R}$ given by
  \[
    \ell(\bm{X},Y,h;\tau)=\rho_{\tau}(h(\bm{X})-Y)\,.
  \]
  We also define the \emph{expected quantile loss} $\ell(\bm{X},Y,\mu)$ as the average check-loss incurred by a uniformly sampled quantile value for the single datapoint $\bm{X},Y$. This is expressed as
  \begin{equation}
    \ell(\bm{X}, Y, \mu) := \int_{0}^{1} \rho_{\tau}(\mu(\bm{X},\tau) - Y)\,\mathrm{d}\tau.
    \label{eq:single-expected-quantile-loss}
  \end{equation}
  Similarly, the \emph{expected quantile risk} $R(\mu)$ of a conditional quantile function estimator $\mu: \mathcal{X}\times (0,1)\to \mathbb{R}$ is given by the expectation
  \begin{equation}
    \ell(\mu) := \mathbb{E}_{\bm{X},Y\sim P_{\bm{X},\bm{Y}}}\left[\int_{0}^{1} \rho_{\tau}(\mu(\bm{X},\tau) - Y)\,\mathrm{d}\tau\right].
    \label{eq:empirical-expected-quantile-risk}
  \end{equation}
  Finaly, we also define its empirical version given $n$ datapoints $\bm{X},\bm{Y}$ as
  \begin{equation}
    \ell_{n}(\mu;\bm{X}, \bm{Y}) := \frac{1}{n}\sum_{i=1}^{n}\int_{0}^{1} \rho_{\tau}(\mu(X_{i},\tau) - Y_{i})\,\mathrm{d}\tau.
    \label{eq:empirical-expected-quantile-risk}
  \end{equation}

  We define the $\bm{r}$\emph{-Monte Carlo quantile loss} as a Monte Carlo approximation of the above by $r$ samples
  \begin{equation}
    \ell^{(r)}(\bm{X}, Y, \mu; \bm{\tau}) := \frac{1}{r}\sum_{j=1}^{r} \rho_{\tau_{j}}(\mu(\bm{X},\tau_{j}) - Y)
    \label{eq:single-mc-emp-quantile-loss}
  \end{equation}
  Similarly, we define the $\bm{r}$\emph{-Monte Carlo quantile risk} via the expectation
  \begin{equation}
    \ell^{(r)}(\mu; \bm{\tau}) := \frac{1}{r}\sum_{j=1}^{r}\mathbb{E}_{\bm{X},\bm{Y}}\left[ \rho_{\tau_{j}}(\mu(\bm{X},\tau_{j}) - Y)\right]
    \label{eq:mc-emp-quantile-risk}
  \end{equation}
  and, its empirical version given $n$ data points as
  \begin{equation}
    \ell_{n}^{(r)}(\mu;\bm{X}, \bm{Y}, \bm{\tau}) := \frac{1}{nr}\sum_{i=1}^{n}\sum_{j=1}^{r} \rho_{\tau_{j}}(\mu(X_{i},\tau_{j}) - Y_{i})
    \label{eq:mc-emp-quantile-risk}
  \end{equation}
  where $\tau_{1},\ldots,\tau_{r}\overset{\mathrm{iid}}{\sim} \mathcal{U}(0,1)$.
\end{defn}

\begin{lemma}\label{lem:approximation}

  Under Assumption~\ref{assn:main}, there exists a tree-structured step function $\mu_{\widehat{\mathcal{T}},\widehat{\beta}}\in\mathcal{V}^{K}$ for some $K=2^{s\,(d+1)}$ with $s\in\mathbb{N}\setminus\{0\}$ such that
  \[
\|\mu_{0}-\mu_{\widehat{\mathcal{T}},\widehat{\beta}}\|_{\infty}\leq \|\mu_{0}\|_{\mathcal{H}^{\alpha}} C M^{\alpha} (d+1)/K^{\alpha/(d+1)}
  \]
\end{lemma}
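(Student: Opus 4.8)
The plan is to produce the approximant by a balanced dyadic recursive partition of the domain and to control the pointwise error by a coordinate-wise telescoping argument, which is exactly what yields the \emph{linear} (rather than polynomial) dependence on $d+1$ appearing in the bound. Identifying the tilt $\tau$ with the last coordinate, I regard $\mu_0$ as a $\nu$-H\"older function on the $(d+1)$-dimensional box $[0,1]^{d+1}$ (of side length $M$ in the general compact case; $M=1$ for the unit cube of Assumption~\ref{assn:main}.1), with $\alpha=\nu$ the smoothness exponent from Assumption~\ref{assn:main}.2; continuity lets us work on the closed box and transfer the bound to $\mathcal{X}\times(0,1)$. First I would construct a complete binary tree of depth $s(d+1)$ that cycles through the $d+1$ coordinates, splitting each of them at successive dyadic midpoints exactly $s$ times. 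This tree has exactly $2^{s(d+1)}=K$ terminal nodes, so the induced step function lies in $\mathcal{V}^{K}$, and its leaves are congruent sub-boxes of common side length $h=M\,2^{-s}=M\,K^{-1/(d+1)}$. I would then define $\mu_{\widehat{\mathcal{T}},\widehat{\beta}}$ to equal $\mu_0$ evaluated at a fixed reference point $\tilde z$ (say a corner) of each leaf box.

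For the error bound, fix a leaf box $\Omega$ with reference point $\tilde z=(\tilde z_1,\dots,\tilde z_{d+1})$ and an arbitrary $z=(z_1,\dots,z_{d+1})\in\Omega$. I would interpolate between $z$ and $\tilde z$ one coordinate at a time: set $z^{(0)}=z$ and let $z^{(k)}$ be obtained from $z^{(k-1)}$ by overwriting its $k$-th entry with $\tilde z_k$, so that $z^{(d+1)}=\tilde z$. Consecutive points $z^{(k-1)}$ and $z^{(k)}$ then differ in a single coordinate by at most $h$, whence their distance equals $|z_k-\tilde z_k|\le h$ in \emph{every} $\ell_p$ norm. By the triangle inequality followed by the H\"older seminorm bound on each single-coordinate increment,
\[
|\mu_0(z)-\mu_0(\tilde z)|\le\sum_{k=1}^{d+1}\bigl|\mu_0(z^{(k-1)})-\mu_0(z^{(k)})\bigr|\le (d+1)\,\|\mu_0\|_{\mathcal{H}^{\alpha}}\,h^{\alpha}.
\]
Taking the supremum over $z\in\Omega$, and then over all leaves $\Omega$, bounds $\|\mu_0-\mu_{\widehat{\mathcal{T}},\widehat{\beta}}\|_{\infty}$ by the right-hand side.

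To conclude, I would substitute $h=M\,K^{-1/(d+1)}$ into the display, obtaining $\|\mu_0-\mu_{\widehat{\mathcal{T}},\widehat{\beta}}\|_{\infty}\le \|\mu_0\|_{\mathcal{H}^{\alpha}}\,M^{\alpha}\,(d+1)\,K^{-\alpha/(d+1)}$, with the constant $C$ absorbing the harmless factor from the choice of reference point ($C=1$ suffices with a corner, $C=2^{-\alpha}$ with a center). The main obstacle is not any single estimate but engineering the construction so the $d+1$ factor enters \emph{linearly}: a crude bound replacing the leaf diameter by $h\sqrt{d+1}$ (Euclidean) or $h(d+1)$ ($\ell_1$) would contribute $(d+1)^{\alpha/2}$ or $(d+1)^{\alpha}$, so the coordinate-telescoping decomposition---valid precisely because $\alpha\le 1$ and single-coordinate moves have identical length in all $\ell_p$ norms---is what sharpens the dependence to $(d+1)$. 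A secondary point to verify is that $K=2^{s(d+1)}$ is exactly the leaf count forced by cycling a balanced dyadic split through all $d+1$ coordinates $s$ times, so that the congruent-leaf partition is genuinely realizable as a tree in $\mathcal{V}^{K}$.
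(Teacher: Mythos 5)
Your proof is correct and takes essentially the same route as the paper's: a balanced dyadic ($k$-$d$ tree) partition of $[0,1]^{d+1}$ into $K=2^{s(d+1)}$ congruent cells of side $K^{-1/(d+1)}$, a step function given by the local value of $\mu_0$, and an error bound via the H\"older modulus over each cell. The only difference is that the paper delegates the construction and the cell-diameter bound to Lemma 3.2 and Theorem 3.2 of Ro\v{c}kov\'a and van der Pas (2020), whereas you supply the self-contained coordinate-telescoping argument (and note, as a minor aside, that since $\alpha\le 1$ even the crude $\ell_1$-diameter bound $(d+1)^{\alpha}h^{\alpha}\le (d+1)h^{\alpha}$ would already suffice for the stated linear dependence on $d+1$).
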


\begin{proof}
  Following the proof of Lemma 3.2 in~\citet{rockova2020posterior}, for any valid tree partition $\mathcal{T}=\{\Omega_{k}:k\in[K]\}$, there exists a step function $\mu_{\mathcal{T}, \widehat{\beta}}$ such that for any $k\in [K]$, for any $\bm{x}\in \Omega_{k}\cap \left(\mathcal{X}\times [0,1]\right)$, we have
  \[
    \lvert \mu(\bm{x}) - \mu_{\mathcal{T}, \widehat{\beta}}(\bm{x})\rvert \leq \|\mu\|_{\mathcal{H}^{\alpha}}\mathrm{diam}^{\alpha}(\Omega_{k}; \mathcal{S})\,.
  \]
  Taking $\widehat{\mathcal{T}}$ to be a $k$-$d$ tree partition $\{\widehat{\Omega}_{k}:k\in [K]\}$ (refer to \citep{rockova2020posterior}), and taking the supremum over $\bm{x}\in \mathcal{X}$ yields
  \[
    \|\mu-\mu_{\widehat{\mathcal{T}}, \widehat{\beta}}\|_{\infty} < C_{\mathrm{max}}\|\mu\|_{\mathcal{H}^{\alpha}} \max_{1\leq k\leq K}\mathrm{diam}^{\alpha}(\widehat{\Omega}_{k}; \mathcal{S})
  \]
  for some absolute constant $C_{\mathrm{max}}>1$. The same technique used in the proof of Theorem 3.2 of~\citet{rockova2020posterior} then yields the desired conclusion.
\end{proof}

\subsection{Proof of Theorem~\ref{thm:concentration}}\label{sec:concentration-thm-proof}

\begin{proof}

  We employ Theorem 4.1 of \citet{syring20_gibbs_poster_concen_rates_under}. The authors provide a set of sufficient conditions on our prior distribution and loss function to achieve concentration of the Gibbs posterior around $\mu_{0}$ at a particular rate. We verify these sufficient conditions given the setting in Assumption~\ref{assn:main}.

  \paragraph{Condition 2.} Observe that the loss function $\ell^{(r)}$ (defined in~\eqref{eq:single-mc-emp-quantile-loss}) is of the form
  \[
    \ell^{(r)} = \sum_{j=1}^{r}\ell(\cdot\,;\tau_{j})\,.
  \]
  From the proof of Proposition 8 in~\citep{syring20_gibbs_poster_concen_rates_under}, there exists $K>0$ such that for $\mu \in \mathcal{F}_{n}$, the loss $\ell(\cdot\,;\tau): \mathcal{F} \to \mathbb{R}$ satisfies
  \[
    \mathbb{E}_{\bm{X},Y\sim P}\left[\exp\left(-\omega_{n} \left( \ell(\bm{X},Y,\mu;\tau) - \ell(\bm{X},Y,\mu_{0}(\cdot,\tau);\tau) \right)\right) \right] \leq \exp \left( -K_{n} \omega_{n} \varepsilon_{n}^{2} \right)
  \]
  with $K_{n}=(K/2)\log(n)^{-2p}$. We note that this requires Assumption~\ref{assn:main}.3. Therefore, we have for $\mathcal{F}_{n}=\{\mu:\|\mu\|_{\infty}\leq \log^{p}(n)\}$,
  \begin{align*}
   &\mathbb{E}_{\bm{X},Y\sim P}\left[\exp\left(-\omega_{n} \left( \ell^{(r)}(\bm{X},Y,\mu) - \ell^{(r)}(\bm{X},Y,\mu_{0}) \right)\right)\right] \\ =\,\,& \mathbb{E}_{\bm{X},Y\sim P}\left[\exp\left(-\frac{1}{r}\sum_{j=1}^{r} \omega_{n} \left( \ell(\bm{X},Y,\mu(\cdot,\tau_{j});\tau_{j}) - \ell(\bm{X},Y,\mu_{0}(\cdot,\tau_{j});\tau_{j}) \right)\right)\right]
    \\\leq \,\,& \frac{1}{r}\sum_{j=1}^{r} \mathbb{E}_{\bm{X},Y\sim P}\left[ \exp\left(- \omega_{n} \left( \ell(\bm{X},Y,\mu(\cdot,\tau_{j});\tau_{j}) - \ell(\bm{X},Y,\mu_{0}(\cdot,\tau_{j});\tau_{j}) \right)\right)\right] \\
  \leq \,\,& \exp \left( -K_{n} \omega_{n} \varepsilon_{n}^{2} \right)
  \end{align*}
  which verifies Condition 2 of~\citep{syring20_gibbs_poster_concen_rates_under} with $\omega_{n}=c\log^{-2p}(n)$ for some $0<c<1/2$.

  \paragraph{Prior mass condition.} Now, we verify the prior mass condition in Theorem 4.1 of~\citep{syring20_gibbs_poster_concen_rates_under}.
  As in~\citep{syring20_gibbs_poster_concen_rates_under}, we define $m(\mu,\mu_{0}):=\mathbb{E}_{\bm{X},Y\sim P}\left[\ell^{(r)}(\bm{X},Y,\mu)-\ell^{(r)}(\bm{X},Y,\mu_{0})\right]$ and $v(\mu,\mu_{0}):= \mathbb{E}_{\bm{X},Y\sim P} \left[ \left(\ell^{(r)}(\bm{X},Y,\mu) - \ell^{(r)}(\bm{X},Y,\mu_{0})\right)^{2} \right]-m(\mu,\mu_{0})^{2}$ to denote the mean and variance of the excess loss. Similarly, define the mean and variance of the excess $\tau$-check loss specifically as $m_{\tau}(\mu,\mu_{0}):=\mathbb{E}_{\bm{X},Y\sim P}\left[\ell(\bm{X},Y,\mu(\cdot,\tau);\tau)-\ell(\bm{X},Y,\mu_{0}(\cdot,\tau);\tau)\right]$ and $v_{\tau}(\mu,\mu_{0}):= \mathbb{E}_{\bm{X},Y\sim P} \left[ \left(\ell(\bm{X},Y,\mu(\cdot,\tau);\tau) - \ell(\bm{X},Y,\mu_{0}(\cdot,\tau);\tau)\right)^{2} \right]-m_{\tau}(\mu,\mu_{0})^{2}$.

  We note that
  \begin{align*}
    m(\mu,\mu_{0}) &= \frac{1}{r}\sum_{j=1}^{r}m_{\tau_{j}}(\mu,\mu_{0}) \\ &\lesssim \frac{1}{r}\sum_{j=1}^{r}\left(\frac{1}{\tau_{j}(1-\tau_{j})}\right)\|\mu(\cdot,\tau_{j})-\mu_{0}(\cdot, \tau_{j})\|_{L_{1}} \\
    &\lesssim \left(\frac{1}{r}\sum_{j=1}^{r}\frac{1}{\tau_{j}(1-\tau_{j})}\right) \max_{j\in[r]}\|\mu(\cdot,\tau_{j})-\mu_{0}(\cdot, \tau_{j})\|_{L_{1}}\,.
  \end{align*}
  and similarly
  \begin{align*}
    v(\mu,\mu_{0}) &\leq \frac{1}{r}\sum_{j=1}^{r}\left(v_{\tau_{j}}(\mu,\mu_{0}) + m_{\tau_{j}}(\mu,\mu_{0})^{2}\right) \\ &\lesssim \frac{1}{r}\sum_{j=1}^{r}\left(\left(\frac{1}{\tau_{j}^{2}(1-\tau_{j})^{2}}\right)\left(\|\mu(\cdot,\tau_{j})-\mu_{0}(\cdot, \tau_{j})\|^{2}_{L_{2}} + \|\mu(\cdot,\tau_{j})-\mu_{0}(\cdot,\tau_{j})\|_{L_{1}}^{2}\right)\right)\\
    &\lesssim \left(\frac{1}{r}\sum_{j=1}^{r}\frac{1}{\tau_{j}^{2}(1-\tau_{j})^{2}}\right) \left(\max_{j\in[r]}\|\mu(\cdot,\tau_{j})-\mu_{0}(\cdot, \tau_{j})\|^{2}_{L_{2}} + \max_{j\in[r]}\|\mu(\cdot,\tau_{j})-\mu_{0}(\cdot, \tau_{j})\|^{2}_{L_{1}}\right)\,.
  \end{align*}

  We now proceed as in~\citet{rockova2020posterior}. Lemma~\ref{lem:approximation} guarantees that there exists a tree-supported step-function $\mu_{\widehat{\mathcal{T}},\widehat{\beta}}$ on a partition $\widehat{\mathcal{T}}$ that approximates $\mu_{0}$ with error
  \[
    \|\mu_{0}-\mu\|_{\infty} \leq \|\mu_{0}\|_{\mathcal{H}^{\nu}} C_{1} (d+1) / K^{-\nu/(d+1)}\,.
  \]
  We choose $K=J_{n}$ such that the approximation error is bounded above by $\epsilon_{n}/2$. The reverse triangle ineqality and this choice of $J_{n}$ then yield the conclusion that $\|\beta-\widehat{\beta}\|_{\infty}<\epsilon_{n}/2$ implies that $\|\mu_{0}-\mu_{\widehat{\mathcal{T}},\beta}\|_{\infty}<\epsilon_{n}$.

  We can proceed using the argument in the proof of Proposition 8 of~\citet{syring20_gibbs_poster_concen_rates_under}. Define $J_{n}=n^{-1/(2\nu+d+1)}$. We now consider tree-supported step functions $\mu_{\wh{\mathcal{T}},\beta}$ on a partition $\wh{\mathcal{T}}$ and a coefficient vector $\beta\in \mathbb{R}^{J_{n}}$.
  The above bounds on $m(\mu,\mu_{0})$ and $v(\mu,\mu_{0})$ allow us to conclude that for any $\mu_{\wh{\mathcal{T}},\beta}$ satisfying $\|\mu_{\wh{\mathcal{T}},\beta}-\mu_{0}\|_{\infty}\leq CJ_{n}^{-\nu}$, we have
  \[
    m(\mu_{\wh{\mathcal{T}},\beta},\mu_{0})\vee v(\mu_{\wh{\mathcal{T}},\beta},\mu_{0}) \lesssim J_{n}^{-2\nu}\,.
  \]
  This implies that
  \begin{align*}
    \Pi\left\{\mu\in \mathcal{F}_{\mathcal{T}} : m(\mu,\mu_{0})\vee v(\mu,\mu_{0}) \leq J_{n}^{-2\nu}\right\} &\geq  \Pi\left\{\mu\in \mathcal{F}({\wh{\mathcal{T}}}) : m(\mu,\mu_{0})\vee v(\mu,\mu_{0}) \lesssim J_{n}^{-2\nu}\right\}\pi(\widehat{\mathcal{T}}) \\                                                  &\geq  \Pi\left\{\mu\in \mathcal{F}({\wh{\mathcal{T}}}) : \|\mu_{0}-\mu\|_{\infty} \lesssim J_{n}^{-\nu}\right\}\pi(\widehat{\mathcal{T}}) \\
     &\geq  \Pi\left\{\widehat{\beta}\in\mathbb{R}^{J_{n}} : \|\beta_{0}-\widehat{\beta}\|_{\infty} \lesssim J_{n}^{-\nu}\right\}\pi(\widehat{\mathcal{T}}) \\
    &\gtrsim \exp(-C_{2}J_{n}\log n)
  \end{align*}
  for some constant $C_{2}>0$, where the last inequality bounds the (log of the) first factor exactly as in the proof of Theorem 1 of~\citet{ shen2015adaptive}. The (log of the) second factor is bounded as in the proof of Theorem 4.1 in~\citet{rockova2020posterior}. This verifies that the step-function prior on the conditional quantile function $\mu$ assigns sufficient mass around the truth.

  Thus, the two sufficient conditions for Theorem 4.1 of~\citet{syring20_gibbs_poster_concen_rates_under} hold, yielding the desired result. Note that the condition on the sequence of prior supports is true by Assumption~\ref{assn:main}, where we restrict the support of the prior to $\mathcal{F}_{n}$, but that this may not be necessary (see \citet{kleijn2006misspecification}).

\end{proof}

\subsection{Proof of Lemma~\ref{lem:flat-pm}}\label{sec:pf-lem-1}

\begin{proof}

Recall that $q^{\mathrm{PM}}_{\tau,\lambda}$ takes the form~\eqref{eq:quantile-post-mean}. The numerator can be evaluated as
\begin{align*}
   &\sum_{k=0}^{n} \int_{Y_{(k)}}^{Y_{(k+1)}} \mu \exp\left(-\lambda^{-1}(\tau-1) \sum_{i=1}^{k}Y_{(i)}-\lambda^{-1}\tau\sum_{i=k+1}^{n}Y_{(i)}\right)\exp\left(-\lambda^{-1}(k-n\tau)\mu\right)\,\mathrm{d}\mu \\ =&
    \sum_{k=0}^{n} \int_{Y_{(k)}}^{Y_{(k+1)}} w_{k}\exp\left(-c_{k}\mu\right)\mu\,\mathrm{d}\mu \\
=& \sum_{k=0}^{n} w_{k} \begin{cases}\frac{1}{c_{k}^{2}}\left[(c_{k}Y_{(k)}+1)\exp(-c_{k}Y_{(k)})-(c_{k}Y_{(k+1)}+1)\exp(-c_{k}Y_{(k+1)})\right] & k/n \neq \tau \\ \frac{1}{2}\left[Y_{(k+1)}^{2}-Y_{(k)}^{2}\right] & k/n = \tau\end{cases}
\end{align*}
where $w_{k}:=\exp\left(-\lambda^{-1}(\tau-1) \sum_{i=1}^{k}Y_{(i)}-\lambda^{-1}\tau\sum_{i=k+1}^{n}Y_{(i)}\right)$ and $c_{k}:= \lambda^{-1}(k-n\tau)$. We also have $Y_{(0)}:=-\infty$ and $Y_{(n+1)}:=\infty$.

The denominator can be evaluated in a similar way
\begin{align*}
   &\sum_{k=0}^{n} \int_{Y_{(k)}}^{Y_{(k+1)}} \exp\left(-\lambda^{-1}(\tau-1) \sum_{i=1}^{k}Y_{(i)}-\lambda^{-1}\tau\sum_{i=k+1}^{n}Y_{(i)}\right)\exp\left(-\lambda^{-1}(k-n\tau)\mu\right)\,\mathrm{d}\mu \\ =&
    \sum_{k=0}^{n} \int_{Y_{(k)}}^{Y_{(k+1)}} w_{k}\exp\left(-c_{k}\mu\right)\,\mathrm{d}\mu \\
=& \sum_{k=0}^{n} w_{k} \begin{cases}\frac{1}{c_{k}}\left[\exp(-c_{k}Y_{(k)})-\exp(-c_{k}(Y_{(k+1)})\right] & k/n \neq \tau \\ \left[Y_{(k+1)}-Y_{(k)}\right] & k/n = \tau\end{cases}
\end{align*}

The form~\eqref{eq:pm-form} allows for numerically stable computation of the estimator in log-space and allows the practitioner to avoid approximate inference via numerical integration or MCMC methods.

\end{proof}

\subsection{Proof of Lemma~\ref{lem:posterior_mean_convergence}}\label{subsec:asymp-pf}

\begin{proof}
We refer to \citet{chernozhukov2003classical} and \citet{ruppert2015statistics} which characterize the asymptotic behavior of Gibbs posteriors and of the sample quantile respectively.
\end{proof}

\subsection{Proof of Lemma~\ref{lem:asymp-lam}}\label{subsec:asymp-pf-lam}

\begin{proof}
  As discussed in Sec. 3.1 of \citet{bissiri2016general}, the limiting behavior as the learning rate grows large and as the learning rate converges to zero for proper priors is well known.

In the case of the improper prior, as $\lambda \to \infty$, the exponent approaches zero, and thus $\lim_{\lambda \to \infty} w_k = 1$ for all $k=0, \ldots, n$. We use the change of variables $\nu=\mu-\bar{Y}$ to write \begin{align}\label{eq:mean-and-remainder}
  \hat{q}^{\mathrm{PM}}_{\tau,\lambda}&=\frac{\int (\nu + \bar{Y})\exp\left(-\lambda^{-1}\sum_{i=1}^{n}(Y_{i}-\bar{Y}-\nu)(\tau-\mathbb{I}\{Y_{i}-\bar{Y}<\nu\})\right)\mathrm{d}\nu}{\int \exp\left(-\lambda^{-1}\sum_{i=1}^{n}(Y_{i}-\bar{Y}-\nu)(\tau-\mathbb{I}\{Y_{i}-\bar{Y}<\nu\})\right)\mathrm{d}\nu} \notag \\
  &=\bar{Y}+ \frac{\int \nu\exp\left(-\lambda^{-1}\sum_{i=1}^{n}(Z_{i}-\nu)(\tau-\mathbb{I}\{Z_{i}<\nu\})\right)\mathrm{d}\nu}{\int \exp\left(-\lambda^{-1}\sum_{i=1}^{n}(Z_{i}-\nu)(\tau-\mathbb{I}\{Z_{i}<\nu\})\right)\mathrm{d}\nu}
\end{align}
where $Z_{i}=Y_{i}-\bar{Y}$ for $i=1,\ldots,n$ are the centered data points. We turn out attention to the second term and expand the numberator as
\begin{align*}
\sum_{k=1}^{n-1} w_{k}\int_{Z_{(k)}}^{Z_{(k+1)}}\nu \exp\left(-c_{k}\nu\right)\,\mathrm{d}\nu &+ w_{0}\int_{-\infty}^{-M}\nu \exp\left(-c_{0}\nu\right)\,\mathrm{d}\nu \\ &+ w_{0}\int_{-M}^{-Z_{1}}\nu \exp\left(-c_{0}\nu\right)\,\mathrm{d}\nu \\ &+ w_{n}\int_{Z_{(n)}}^{M}\nu \exp\left(-c_{n}\nu\right)\,\mathrm{d}\nu \\&+ w_{0}\int_{M}^{\infty}\nu \exp\left(-c_{n}\nu\right)\,\mathrm{d}\nu
\end{align*}
where $M>0$ is a constant such that $M>Z_{(n)}$ and $-M < Z_{(1)}$. The integrands on all interior terms between $-M$ and $M$ converge uniformly on finite intervals, and so the dominated convergence theorem allows the limits to be evaluated inside the integrals, causing the sum of all of these terms to converge to a finite constant. We recall that $w_{k}\to 1$ as $\lambda\to\infty$ for all $k=0,1,\ldots,n$, and so the asymptotic behavior is controlled by the outer terms. These evaluate to
\[
\int_{M}^{\infty}\nu\exp(-c_{n}\nu)\,\mathrm{d}\nu=\frac{\lambda}{n(1-\tau)}M\exp\left(-\frac{n(1-\tau)\,M}{\lambda}\right)+\frac{\lambda^{2}}{n^{2}(1-\tau)^{2}\exp\left(-\frac{n(1-\tau)\,M}{\lambda}\right)}
\]
and
\[
\int_{-\infty}^{-M}\nu\exp(-c_{0}\nu)\,\mathrm{d}\nu=-\frac{\lambda}{n\tau}M\exp\left(-\frac{n\tau M}{\lambda}\right)-\frac{\lambda^{2}}{n^{2}\tau^{2}}\exp\left(-\frac{n\tau M}{\lambda}\right)
\]
As $\lambda\to\infty$, the numerator therefore becomes asymptotically becomes of order $\lambda^{2}$ for $\tau\neq 0.5$, and for $\tau=0.5$ becomes of constant order with respect to $\lambda$. Similar analysis of the denominator shows that it is always of order $\lambda$ for all $\tau\in(0,1)$. This implies that the remainder term in~\eqref{eq:mean-and-remainder} is of order $\lambda$ for $\tau\neq 0.5$ and of order $\lambda^{-1}$ for $\tau=0.5$.

\end{proof}

\end{document}